\documentclass[journal,comsoc]{IEEEtran}
\usepackage[T1]{fontenc}

\usepackage{subfigure}
\usepackage{tabularx}
\usepackage{xcolor}
\usepackage{multirow}
\usepackage{booktabs}
\usepackage{array}
\newcolumntype{L}[1]{>{\raggedright\let\newline\\\arraybackslash\hspace{0pt}}m{#1}}
\newcolumntype{C}[1]{>{\centering\let\newline\\\arraybackslash\hspace{0pt}}m{#1}}
\newcolumntype{R}[1]{>{\raggedleft\let\newline\\\arraybackslash\hspace{0pt}}m{#1}}

\usepackage[english]{babel}
\usepackage{amsthm}
\usepackage{amssymb}

\newtheorem{lemma}{Lemma}
\newtheorem{corollary}{Corollary}
\newtheorem{proposition}{Proposition}
\newtheorem{definition}{Definition}

\newcommand{\E}{\mathbb{E}}
\newcommand{\C}{\mathbb{C}}

\newcommand{\cS}{\mathcal{S}}
\newcommand{\cT}{\mathcal{T}}

\newcommand{\dmin}{d_{\min}}
\newcommand{\kurt}{\mu_4}
\newcommand{\ism}{\nu_{-2}}
\newcommand{\MSE}{\mathrm{MSE}}
\newcommand{\CRB}{\mathrm{CRB}}
\newcommand{\SNR}{\mathrm{SNR}}
\newcommand{\SCR}{\mathrm{SCR}}
\newcommand{\diag}{\mathrm{diag}}

\usepackage{cite}

\ifCLASSINFOpdf
   \usepackage[pdftex]{graphicx}
\else
   or other class option (dvipsone, dvipdf, if not using dvips). graphicx
   \usepackage[dvips]{graphicx}
   \graphicspath{{../eps/}}
   \DeclareGraphicsExtensions{.eps}
\fi
\usepackage[normalem]{ulem}

\usepackage{amsmath}
\usepackage[cmintegrals]{newtxmath}
\usepackage{multicol}
\usepackage{algorithm}
\usepackage{algpseudocode}

\usepackage{verbatim}

\usepackage{array}
\usepackage{makecell}

\usepackage[pagewise]{lineno}
\usepackage{upgreek}

\ifCLASSOPTIONcompsoc
  \usepackage[caption=false,font=normalsize,uabelfont=sf,textfont=sf]{subfig}
\else
  \usepackage[caption=false,font=footnotesize]{subfig}
\fi
\begin{document}
\abovedisplayshortskip=1pt
\belowdisplayshortskip=1pt
\abovedisplayskip=1pt
\belowdisplayskip=1pt
\textfloatsep=1pt
\floatsep=1pt

\setcounter{figure}{0}
\renewcommand{\figurename}{Fig.}

\title{Physical-Layer Sensing Privacy via Constellation Shaping for OFDM-ISAC Systems: \\ Theory, Design, and Experiments}

\author{Kawon~Han,~\IEEEmembership{Member,~IEEE,}
        Kaitao~Meng,~\IEEEmembership{Member,~IEEE,}
        and~Christos~Masouros,~\IEEEmembership{Fellow,~IEEE}%
\thanks{K. Han is with the Department of Electrical Engineering, Ulsan National Institute of Science and Technology (UNIST), Ulsan, South Korea (e-mail: kawon.han@unist.ac.kr).}%
\thanks{K. Meng is with the Department of Electrical and Electronic Engineering, The University of Manchester, Manchester, M13 9PL, U.K. (e-mail: kaitao.meng@manchester.ac.uk).}%
\thanks{C. Masouros is with the Department of Electronic and Electrical Engineering, University College London, London, WC1E 6BT, U.K. (e-mail: c.masouros@ucl.ac.uk).}}

\maketitle
\begin{abstract}
The integration of sensing into communication networks introduces a new privacy risk, as a passive eavesdropper (Eve) may exploit ISAC data signals as signals of opportunity to perform unauthorized sensing of targets. In this paper, we develop a sensing-privacy-enhancing geometric constellation shaping (GCS) framework for OFDM-ISAC systems. The key observation is that constellation-dependent ranging performance is receiver-specific. For matched filtering at Eve, the ranging MSE is governed by the constellation kurtosis $\kurt$, whereas reciprocal filtering at the legitimate receiver (Alice) is governed by the inverse second-order moment $\ism$. Based on closed-form MSE expressions, we define sensing privacy as the ranging MSE gap between Eve and Alice and characterize its dependence on these two moments. The analysis shows that positive skewness of the symbol-power distribution is necessary for a positive intrinsic moment gap, namely $\kurt-\ism$. We further derive an exact skewness-based decomposition of the intrinsic moment gap and a canonical two-ring characterization, providing analytical guidelines for privacy-enhancing constellation geometries. We then formulate Eve-aware and Eve-agnostic GCS designs that balance sensing privacy and communication reliability through the minimum Euclidean distance (MED), with the Eve-agnostic design depending only on the intrinsic moment gap. Numerical results demonstrate scalable privacy--communication trade-offs, while over-the-air experiments show that the proposed constellation shaping substantially increases the ranging error gap between Eve and Alice with only a small communication throughput loss.
\end{abstract}

\begin{IEEEkeywords}
Constellation shaping, integrated sensing and communication (ISAC), orthogonal frequency division multiplexing (OFDM), physical layer security (PLS), sensing privacy.
\end{IEEEkeywords}

\IEEEpeerreviewmaketitle

\section{Introduction}

\IEEEPARstart{I}ntegrated sensing and communication (ISAC) is widely regarded as a key technology for sixth-generation (6G) wireless networks, enabling sensing and data transmission over a shared waveform, spectrum, and hardware platform \cite{Liu2022JSAC,Hassanien2016}. By reusing communication signals for sensing, ISAC improves spectral and hardware efficiency and supports applications ranging from autonomous driving and smart cities to industrial automation \cite{Zhang2021JSTSP,Liu2020TCOM,Meng2025WCM, han2025network}. In particular, communication-centric ISAC, where random data payloads modulated onto orthogonal frequency division multiplexing (OFDM) subcarriers are simultaneously exploited for sensing, provides a standard-compatible pathway toward ISAC deployment \cite{Sturm2011,LiuCPOFDM2025,Keskin2025}.

However, this dual use of communication signals introduces a distinct privacy concern. Target-reflected ISAC signals can propagate beyond the intended sensing receiver, allowing a passive radar eavesdropper (Eve) to exploit them as signals of opportunity for unauthorized target detection, localization, and tracking \cite{Berger2010,Qu2024}. Unlike communication data, which can be protected through encryption \cite{Shiu2011,Wang2018Survey}, sensing information is embedded directly in the physical propagation channel and may reveal privacy-sensitive information such as vital signs, speech-induced vibrations, and gestures \cite{Shah2019,Wan2014}. Protecting sensing privacy therefore requires physical-layer mechanisms that degrade Eve's sensing capability while preserving legitimate sensing and communication performance.

\subsection{Related Works}
Early studies on wireless sensing privacy primarily considered channel state information (CSI)-based sensing in WiFi systems. One class of approaches modifies the propagation environment using rotating antennas, reconfigurable intelligent surfaces (RIS), or dedicated reflectors to distort the channel observed by an unauthorized sensor \cite{Yao2024,Ruan2024,Staat2022,Shenoy2022}. Another approach randomizes the transmitted spatial signature through antenna scheduling or randomized beamforming \cite{Hernandez2023,Cominelli2024}. Pilot-obfuscation methods instead encode or scramble the reference signals so that only authorized receivers can recover the true CSI \cite{Ghiro2022,Abanto2020,Wang2024TIFS,Luo2024,Hu2024WiShield}. These approaches are effective when sensing relies on known pilots or CSI estimates. However, a passive radar Eve equipped with a reference channel can directly acquire an over-the-air copy of the transmitted waveform and correlate it with the target-reflected signal \cite{Berger2010}. In this case, hiding or modifying the nominal pilot structure alone does not prevent unauthorized sensing.

Sensing privacy has more recently been studied in ISAC systems through transceiver- and propagation-domain designs. Artificial-noise schemes inject controlled interference that can be canceled or exploited by the legitimate transceiver while degrading Eve's sensing signal-to-interference-plus-noise ratio (SINR) or sensing mutual information \cite{Zou2024,Musallam2025,Jia2025}. RIS-assisted designs create controlled reflections to suppress or distort Eve's observations \cite{Magbool2025}, while scatterer-assisted transmission deliberately illuminates environmental objects to generate additional clutter at Eve \cite{Chen2025}. Other approaches conceal the transmit directionality \cite{Ma2025}. Although these methods provide effective sensing-security mechanisms, many rely on knowledge of Eve's channel, location, or statistical geometry, and often consume additional transmit power, spatial degrees of freedom, or auxiliary hardware \cite{HanProcIEEE2026}. Moreover, the theoretical foundations of sensing privacy remain underdeveloped, and a systematic understanding of how physical-layer design parameters govern Eve's sensing capability is still largely unexplored.

At the waveform level, the work in \cite{Han2025SensingSecure} introduced Eve-agnostic sensing security through ambiguity-function engineering. A structured OFDM subcarrier power allocation creates deterministic ambiguity peaks in Eve's MF range profile, while Alice suppresses these peaks through RF by exploiting knowledge of the transmitted waveform. This work established the receiver knowledge asymmetry between Alice and Eve as a useful physical-layer resource for sensing security. The resulting security gain, however, comes at the cost of RF noise enhancement at Alice, with the trade-off later generalized through a Kullback--Leibler divergence (KLD)-based formulation in \cite{du2025securing}. In radar-centric ISAC, \cite{Temiz2026} similarly employs index and phase modulation of FMCW chirps to distort the ambiguity function perceived by unauthorized receivers, and also the exploitation of phase noise for ISAC sensing privacy is presented in \cite{keskin2026exploiting}.

Beyond these waveform-domain sensing-security approaches, recent studies have established the modulation constellation itself as an important design degree of freedom (DoF) in communication-centric ISAC. For MF sensing with random OFDM payloads, the average data-dependent sidelobe power is governed by the fourth-order moment, or kurtosis, of the constellation \cite{LiuIceberg2025}, motivating probabilistic and geometric constellation shaping for improved sensing performance \cite{Du2024,Yang2024,Geiger2025b,Hu2025,Xu2024ICUS, meng2026constellation}. For RF sensing, data equalization removes the random sidelobes but amplifies noise according to the inverse second-order moment of the constellation \cite{Wojaczek2019,Rodriguez2023,Mercier2020}. The work in \cite{Han2026OFDMISAC} unified these receiver-dependent effects through closed-form multi-target ranging MSE analysis. However, existing constellation-shaping studies have focused exclusively on improving the sensing performance of the intended receiver. Exploiting the distinct constellation dependencies of MF and RF receivers as a mechanism for sensing privacy remains unexplored.

\subsection{Motivation and Contributions}
The above results reveal a modulation-domain opportunity for protecting sensing privacy. Under the passive-radar model considered in this work, Eve does not know the random data payload a priori and therefore relies on MF processing using an over-the-air reference, whereas Alice knows the transmitted waveform and can apply a mismatched filter. The resulting ranging errors depend on different constellation moments: Eve's multi-scatterer MSE contains a clutter-induced term proportional to the constellation kurtosis $\kurt$, while Alice's MSE is governed by the inverse second-order moment $\ism$ through RF noise enhancement. Hence, constellation shaping can deliberately increase Eve's ranging error relative to Alice's without changing the OFDM subcarrier allocation or requiring additional spatial resources.

Based on this receiver asymmetry, we develop a sensing-privacy-enhancing GCS framework for OFDM-ISAC systems. \textbf{The main contributions are summarized as follows:}
\begin{itemize}

\item \textbf{Closed-form sensing privacy.}
We derive ranging MSE expressions for a passive Eve and a legitimate receiver in a multi-scatterer OFDM sensing environment, including Eve's imperfect reference signal. Defining sensing privacy as the ranging MSE gap between Eve and Alice yields a closed-form expression that separates Eve's clutter-dependent kurtosis penalty, the target-SNR difference, and Alice's RF noise-enhancement penalty.

\item \textbf{Fundamental characterization of privacy-enhancing constellation geometry.}
We characterize how the symbol-power distribution controls sensing privacy. For the Eve-agnostic moment gap $\kurt-\ism$, we derive an exact decomposition into a skewness term and a nonnegative residual, showing that positive power skewness is necessary for a positive moment gap. We further analyze the local behavior around PSK and a canonical two-ring family to reveal the resulting privacy-enhancing geometry.

\item \textbf{Eve-aware and Eve-agnostic geometric constellation shaping.}
We formulate GCS designs that balance sensing security and communication reliability. The Eve-aware design exploits statistical information about Eve's sensing environment, whereas the Eve-agnostic design depends only on the intrinsic moment gap $\kurt-\ism$ and requires no knowledge of Eve's location, channel, or clutter condition. The resulting non-convex problems are solved offline using a multi-start quasi-Newton search.

\item \textbf{Over-the-air experimental validation.}
An over-the-air SDR experiment validates the Eve-agnostic design in an uncontrolled outdoor environment. Despite Eve having a $6.7$~dB target-SNR advantage, the proposed GCS increases the measured ranging-deviation gap from $1.29$~m to $1.98$~m with less than $1\%$ communication-throughput loss.
\end{itemize}

\noindent \textit{Notations:} Boldface lower- and upper-case symbols denote vectors and matrices, respectively. $\C$ denotes the set of complex numbers. $(\cdot)^T$, $(\cdot)^H$, and $(\cdot)^*$ denote transpose, Hermitian transpose, and complex conjugation, respectively. $\diag(\mathbf{a})$ denotes a diagonal matrix with diagonal entries $\mathbf{a}$. The operators $\odot$ and $\oslash$ denote element-wise multiplication and division, respectively, and $\E[\cdot]$ denotes statistical expectation.

\section{System Model}
We consider an OFDM-based ISAC network illustrated in Fig.~\ref{fig:system}, comprising a single-antenna ISAC transmitter (TX), a legitimate sensing receiver (Alice), which may be collocated with the TX for monostatic sensing or spatially separated for bistatic sensing with prior TX signal information, a communication user, and an unauthorized passive sensing eavesdropper (Eve). The TX radiates an OFDM waveform with random data payloads to simultaneously serve the communication user and illuminate the targets of interest. The target-reflected signal reaches both Alice and Eve. In addition, a direct (surveillance) copy of the transmitted waveform leaks to Eve, providing it with a reference for passive-radar sensing. In the absence of any knowledge about the Eve, neither the reflection toward Eve nor the direct leakage can be prevented by the TX.

\begin{figure}[t!]
\centering
\includegraphics[width=0.8\columnwidth]{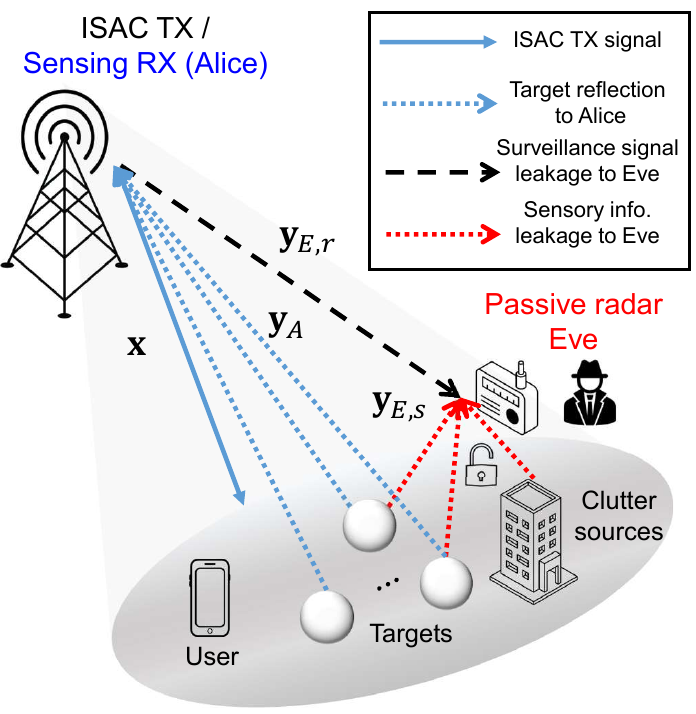}
\caption{Sensing-privacy scenario in OFDM-ISAC: the TX signal $\mathbf{x}$ supports communication and legitimate sensing at Alice, while surveillance-signal leakage $\mathbf{y}_{E,r}$ and sensory-information leakage $\mathbf{y}_{E,s}$ enable passive sensing at Eve, which is silent without transmitting any signals.}
\label{fig:system}
\end{figure}

To consider the practical ISAC deployment, we impose the following operational assumptions on Eve:
\begin{itemize}
\item \textbf{(A.1)} Eve operates as a passive bistatic radar equipped with a large receiving array or separately deployed directional antennas \cite{Wojaczek2019}, enabling the separation of the reference/direct-path (surveillance-signal leakage) signal from the target echo.
\item \textbf{(A.2)} Eve knows the location of the ISAC TX, i.e., its range and angle relative to Eve's position, and hence the line-of-sight (LoS) component of the reference channel.
\item \textbf{(A.3)} Eve has no prior knowledge of the transmitted ISAC signal, which is randomized by the communication payload.
\item \textbf{(A.4)} Eve has no prior range-Doppler information about the targets.
\item \textbf{(A.5)} For the Eve-agnostic case, the ISAC TX has no information about Eve, including its existence, location, or channel.
\end{itemize}
Assumption (A.5) describes the practically relevant worst case under (A.1)--(A.4), in which Eve remains completely silent and unknown, while the legitimate system must guarantee sensing privacy without any Eve-side information. 

\subsection{Transmit Signal Model}
The ISAC TX employs an OFDM waveform with $N$ subcarriers and subcarrier spacing $\Delta f = B/N$, where $B$ is the signal bandwidth. The frequency-domain transmit vector is
\begin{equation}
\mathbf{x} = [x_0, x_1, \ldots, x_{N-1}]^T, \qquad x_n \in \cS,
\end{equation}
where the communication symbols $x_n$ are drawn independently and uniformly from an $M$-ary constellation $\cS = \{s_1,\ldots,s_M\}$. Without loss of generality, the constellation is normalized to zero mean and unit average power, $\E[|x_n|^2]=1$. The statistical properties of $\cS$ that govern the ISAC performance under the conventional sensing receiver are given by \cite{Han2026OFDMISAC}
\begin{align}\label{eq:moments}
    \E\left[|x_n|^4\right] & = \kurt, \quad \forall x_n\in\cS \\
    \E\left[|x_n|^{-2}\right] & = \ism, \quad \forall x_n\in\cS,
\end{align}
where $\kurt = \frac{1}{M}\sum_{m=1}^{M}|s_m|^4$ is the fourth-order moment (kurtosis) and $\ism = \frac{1}{M}\sum_{m=1}^{M}|s_m|^{-2}$ is the inverse second-order moment of the constellation. Both moments equal unity for unit-amplitude constellations such as $M$-PSK, and exceed unity for non-unit-amplitude one. Representative values are $(\kurt,\ism)=(1.32,1.89)$ for 16QAM and $(1.38,2.69)$ for 64QAM \cite{Han2026OFDMISAC}. A cyclic prefix (CP) of length $N_{\rm cp}$ is prepended to each OFDM symbol. All target delays are assumed to lie within the inter-symbol interference (ISI)-free region determined by the CP \cite{Xu2025CP}, so that the CP can be omitted from the formulation after its removal at the receivers. The transmitted signal is written compactly in matrix form as $\mathbf{X}=\diag(\mathbf{x})$. In contrast to \cite{Han2025SensingSecure}, we deliberately employ equal subcarrier power allocation, so that all privacy enhancement is attributed to the constellation geometry alone. The combination with structured power allocation is discussed in Section~\ref{sec:conclusion}.

\subsection{Communication System Model}
After CP removal and discrete Fourier transform (DFT), the received frequency-domain signal at the communication user is expressed as
\begin{equation}
\mathbf{y}_c = \mathbf{H}_c\,\mathbf{x} + \mathbf{z}_c,
\end{equation}
where $\mathbf{H}_c = \diag(h_0,\ldots,h_{N-1})$ collects the per-subcarrier channel gains of a frequency-selective fading channel, and $\mathbf{z}_c\sim\mathcal{CN}(\mathbf{0},\sigma_c^2\mathbf{I}_N)$ is additive white Gaussian noise (AWGN). At high SNR, the uncoded reliability of the link is dominated by the pairwise confusion of the two closest constellation points \cite{Caire2002}. Accordingly, we adopt the minimum Euclidean distance (MED) as the communication performance metric for constellation design, which is written as
\begin{equation}\label{eq:MED}
\dmin = \min_{s_i\neq s_j\in\cS} |s_i - s_j|.
\end{equation}
The end-to-end communication performance is further quantified by the measured throughput
\begin{equation}\label{eq:throughput}
\eta~\text{(bits/s)} = \log_2 M \cdot (1-\mathrm{BLER})\cdot\frac{N}{T_{\rm sym}+T_{\rm cp}},
\end{equation}
where $T_{\rm sym}$ and $T_{\rm cp}$ denote the OFDM symbol and CP durations, respectively, and $\mathrm{BLER}$ denotes the block error rate determined as the fraction of transmitted data blocks that are unsuccessfully decoded.

\subsection{Sensing Signal Model at the Legitimate Receiver (Alice)}
Let $K_A$ discrete scatterers, encompassing both targets of interest and environmental clutter sources, be located at distinct, resolvable ranges in Alice's field of view. Scatterers are assumed static or slowly moving so that Doppler-induced inter-carrier interference (ICI) is negligible \cite{Keskin2021ICI}. After CP removal and DFT, the frequency-domain received signal at Alice is given by
\begin{equation}\label{eq:alice_rx}
\mathbf{y}_A = \mathbf{a}_A^T \mathbf{H}_A \mathbf{X} + \mathbf{z}_A,
\end{equation}
where $\mathbf{a}_A = [\alpha_{A,1},\ldots,\alpha_{A,K_A}]^T\in\C^{K_A\times 1}$ collects the complex amplitudes including path loss and radar cross-section (RCS), $\mathbf{H}_A = [\mathbf{h}(\tau_{A,1}),\ldots,\mathbf{h}(\tau_{A,K_A})]^T\in\C^{K_A\times N}$ is the delay channel matrix with range steering vectors
\begin{equation}
\mathbf{h}(\tau) = \left[1, e^{-j2\pi\Delta f\tau},\ldots,e^{-j2\pi(N-1)\Delta f\tau}\right]^T \in \C^{N\times 1},
\end{equation}
where $\tau_{A,k}$ is the round-trip time of flight (TOF) of scatterer $k$, and $\mathbf{z}_A\sim\mathcal{CN}(\mathbf{0},\sigma_A^2\mathbf{I}_N)$.

Since Alice has full knowledge of $\mathbf{X}$, it is possible to utilize mismatched filtering (MMF) to suppress the sidelobes induced by the random payload. To maximize the sidelobe suppression, Alice employs reciprocal filtering (RF) \cite{Wojaczek2019,Rodriguez2023}, i.e., element-wise equalization of the random payload, which is given by
\begin{equation}\label{eq:alice_rf}
\mathbf{y}_{A,\rm RF} = \mathbf{y}_A \oslash \mathbf{x} = \mathbf{a}_A^T\mathbf{H}_A + \mathbf{z}_{A,\rm RF},
\end{equation}
where the equalized noise $\mathbf{z}_{A,\rm RF}=\mathbf{z}_A\oslash\mathbf{x}$ remains zero mean with reshaped per-subcarrier variance as
\begin{equation}\label{eq:rf_noise}
\mathrm{Var}(z_{A,{\rm RF},n}) = \sigma_A^2\,\E\!\left[|x_n|^{-2}\right] = \sigma_A^2\,\ism .
\end{equation}
Hence, RF removes the data-dependent modulation of the sensing channel at the cost of an SNR loss proportional to $\ism$. This is the fundamental mechanism by which the constellation geometry shapes Alice's performance.

\subsection{Sensing Signal Model at the Eavesdropper (Eve)}
Eve is a passive bistatic radar that observes a surveillance (target-echo) signal and a reference signal leaked directly from the TX. Under (A.1), these are separated at Eve's front-end. The surveillance signal reflected by $K_E$ scatterers including targets and clutter is expressed as
\begin{equation}\label{eq:eve_rx}
\mathbf{y}_{E,s} = \mathbf{a}_E^T \mathbf{H}_E \mathbf{X} + \mathbf{z}_{E,s},
\end{equation}
with $\mathbf{a}_E=[\alpha_{E,1},\ldots,\alpha_{E,K_E}]^T$, $\mathbf{H}_E=[\mathbf{h}(\tau_{E,1}),\ldots,\mathbf{h}(\tau_{E,K_E})]^T$, bistatic TOFs $\tau_{E,k}$, and $\mathbf{z}_{E,s}\sim\mathcal{CN}(\mathbf{0},\sigma_E^2\mathbf{I}_N)$. The reference signal follows the Rician channel model, which is given by
\begin{align}\label{eq:eve_ref}
    \mathbf{y}_{E,r} & = \mathbf{h}_{E,r}\odot\mathbf{x} + \mathbf{z}_{E,r},\quad \\
    \mathbf{h}_{E,r} & = \sqrt{\tfrac{g_{E,r}\mathcal{K}}{\mathcal{K}+1}}\,\mathbf{h}_{E,\rm LoS} + \sqrt{\tfrac{g_{E,r}}{\mathcal{K}+1}}\,\mathbf{h}_{E,\rm NLoS},
\end{align}
where $g_{E,r}$ is the reference channel gain, $\mathcal{K}$ the Rician factor, $\mathbf{h}_{E,\rm LoS}$ the LoS component known to Eve, and $\mathbf{h}_{E,\rm NLoS}\sim\mathcal{CN}(\mathbf{0},\mathbf{I}_N)$. Compensating the known LoS path, Eve obtains the effective reference channel as
\begin{equation}\label{eq:eve_ref_eff}
\tilde{\mathbf{y}}_{E,r} = \sqrt{\tfrac{g_{E,r}\mathcal{K}}{\mathcal{K}+1}}\,\mathbf{x} + \sqrt{\tfrac{g_{E,r}}{\mathcal{K}+1}}\,\tilde{\mathbf{h}}_{E,\rm NLoS}\odot\mathbf{x} + \tilde{\mathbf{z}}_{E,r}.
\end{equation}
It should be noted that the effective reference at Eve is a noisy and multipath-corrupted estimate of the transmitted waveform, whose quality is quantified by the reference signal-to-interference-plus-noise ratio (SINR) given by $\gamma_r = \mathcal{K}g_{E,r}/(g_{E,r}+\sigma_E^2(\mathcal{K}+1))$.

\noindent \textbf{Remark 1.} Unlike Alice, Eve relies on a noisy, multipath-corrupted reference rather than exact knowledge of the transmitted payload. Direct RF processing using \eqref{eq:eve_ref_eff} can therefore amplify reference errors and incur substantial output-SNR
loss~\cite{Han2025SensingSecure}. We thus consider Eve employing reference-based matched filtering (MF), $\mathbf{y}_{E,\rm MF} = \mathbf{y}_{E,s}\odot\tilde{\mathbf{y}}_{E,r}^{*}$.\footnote{Cross-correlation between the reference and surveillance signals is a standard passive-radar approach that requires no payload decoding~\cite{liu2015performance}. This implements matched filtering using the received reference as the waveform template.} We do not claim MF to be universally optimal for Eve; rather, our analysis targets the practically relevant passive-radar regime, in which Eve lacks the transmitted payload and relies on an imperfect over-the-air reference. The underlying methodology could be adapted to alternative receiver architectures at Eve by incorporating the corresponding signal-processing models and re-deriving Eve’s sensing-performance characterization. Such extensions would require receiver-specific analysis to reassess the resulting sensing-privacy performance and are beyond the scope of this work. Table~\ref{tab:operational} summarizes this receiver configuration, which defines the scope of our privacy analysis. Whereas \cite{Han2025SensingSecure} exploited this MF--RF asymmetry through the AF of a power-shaped waveform, we exploit it through the constellation moments $\kurt$ and $\ism$, which govern Eve's and Alice's ranging MSEs, respectively.

\begin{table}[t]
\caption{Operational Differences Between Alice and Eve Exploited for Sensing Privacy}
\label{tab:operational}
\centering
\begin{tabular}{lcccc}
\toprule
 & Sensing & TX signal & Receiver & Governing \\
 & mode & knowledge & processing & moment \\
\midrule
Alice & Monostatic & Fully-known & RF & $\ism$ \\
Eve & Bistatic (passive) & Unknown & MF & $\kurt$ \\
\bottomrule
\end{tabular}
\end{table}

\section{Receiver-Specific Ranging Performance: Preliminaries}\label{sec:analysis}
The estimation-theoretic framework for OFDM ranging over random data payloads was established in \cite{Han2026OFDMISAC}. Here, we first recall the two receiver-specific results required by the privacy analysis, restated in the adversarial bistatic setting of Section~II.

As a benchmark, the expected CRB for delay estimation under random signaling is given by \cite{Han2026OFDMISAC}
\begin{equation}\label{eq:ecrb}
\E[\CRB_{\tau_k}] \approx \frac{\sigma^2}{8\pi^2\Delta f^2 |\alpha_k|^2}\left(\frac{3}{N^3} + \frac{27(\kurt-1)}{5N^4}\right),
\end{equation}
in which constellation-dependent term decays as the number of subcarriers $N$ grows and is negligible for a practical number of subcarriers. The fundamental limit is thus essentially constellation-invariant. Sensing privacy cannot be engineered at the level of the estimation-theoretic bound, but through the gap between the practical receivers available to Alice and Eve, whose achievable MSEs depend on the constellation in fundamentally different ways.

Both receivers estimate delays by standard dictionary-based peak search over their respective filter outputs,
\begin{align}\label{eq:eve_est}
\hat{\tau}_{E,k} & = \arg\max_{\tau\in\cT}\big|\mathbf{h}^H(\tau)\,\mathbf{y}_{E,\rm MF}^T\big|,\quad \\ \label{eq:alice_est}
\hat{\tau}_{A,k} & = \arg\max_{\tau\in\cT}\big|\mathbf{h}^H(\tau)\,\mathbf{y}_{A,\rm RF}^T\big|,
\end{align}
over a delay dictionary $\cT$ of sufficient resolution, possibly refined by subspace methods. Now, the following two Lemmas specify the ranging MSEs of Eve and Alice, respectively.

\begin{lemma}[Eve's ranging MSE]\label{thm:eve_mse}
Under MF processing with the estimator \eqref{eq:eve_est}, in the high-SNR regime and for resolvable scatterers, the delay estimation MSE of the $k$-th scatterer at Eve satisfies
\begin{equation}\label{eq:mse_eve}
\MSE_{{\rm Eve},k}(\cS) \geq \frac{3\left((\kurt-1)\sum_{j\neq k}^{K_E}|\alpha_{E,j}|^2 + \sigma_E^2\right)}{8\pi^2\Delta f^2\,|\alpha_{E,k}|^2\,N^3},
\end{equation}
with equality in the ideal-reference SINR $\gamma_r\to\infty$.
\end{lemma}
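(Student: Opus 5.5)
We first treat the ideal-reference case $\gamma_r\to\infty$ and establish equality there. We then show that a finite $\gamma_r$ only adds nonnegative error terms, which gives the inequality.

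\textbf{Ideal-reference output and its decomposition.} With $\gamma_r\to\infty$, after normalisation the effective reference \eqref{eq:eve_ref_eff} reduces to $\mathbf{x}$. The MF output then has entries
\begin{equation*}
y_{E,{\rm MF},n}=\sum_{j=1}^{K_E}\alpha_{E,j}|x_n|^2 e^{-j2\pi n\Delta f\tau_{E,j}}+z_{E,s,n}x_n^* .
\end{equation*}
We write $|x_n|^2=1+(|x_n|^2-1)$, so that
\begin{equation*}
\mathbf{y}_{E,\rm MF}^T=\mathbf{a}_E^T\mathbf{H}_E+\mathbf{w},
\end{equation*}
where the effective disturbance is
\begin{equation*}
w_n=\sum_j\alpha_{E,j}(|x_n|^2-1)e^{-j2\pi n\Delta f\tau_{E,j}}+z_{E,s,n}x_n^*.
\end{equation*}
Using independence and unit power of the symbols, $w_n$ is zero mean and uncorrelated across $n$. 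Its variance is $(\kurt-1)\sum_j|\alpha_{E,j}|^2+\sigma_E^2$, since $\E[(|x_n|^2-1)^2]=\kurt-1$ and $\E[|x_n|^2]=1$.

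The self term $j=k$ needs care. Its contribution $\alpha_{E,k}(|x_n|^2-1)h_n(\tau_{E,k})$ multiplies the same steering vector as the signal. Projected onto the derivative direction that drives the delay error, it contributes only a higher-order correction of the type $27(\kurt-1)/(5N^4)$ in \eqref{eq:ecrb}, which we drop at leading order in $N$. What remains is the disturbance from the scatterers $j\neq k$, of variance $(\kurt-1)\sum_{j\neq k}|\alpha_{E,j}|^2+\sigma_E^2$.

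\textbf{Perturbation analysis of the peak search.} At high SNR and for resolvable scatterers, the peak of \eqref{eq:eve_est} near $\tau_{E,k}$ is perturbed only locally, and the other scatterers' mainlobes do not bias it. We linearise the derivative of $|\mathbf{h}^H(\tau)\mathbf{y}|^2$ around $\tau_{E,k}$, which gives
\begin{equation*}
\hat\tau-\tau_{E,k}\approx \frac{\Re\{\alpha_{E,k}^*\,\dot{\mathbf{h}}^H\mathbf{w}\}}{|\alpha_{E,k}|^2\|\dot{\mathbf{h}}_\perp\|^2}.
\end{equation*}
Here $\dot{\mathbf{h}}_\perp$ is the derivative $\dot{\mathbf{h}}(\tau_{E,k})$ with its component along $\mathbf{h}$ removed, and the numerator uses the centred derivative.

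We then use $\|\dot{\mathbf{h}}_\perp\|^2=4\pi^2\Delta f^2\sum_n(n-\bar n)^2\approx 4\pi^2\Delta f^2N^3/12$. Taking the variance of the real part, i.e.\ half the disturbance variance times $\|\dot{\mathbf{h}}_\perp\|^2$, yields
\begin{equation*}
\frac{\mathrm{Var}(w)}{2|\alpha_{E,k}|^2\cdot 4\pi^2\Delta f^2N^3/12}=\frac{3\,\mathrm{Var}(w)}{8\pi^2\Delta f^2|\alpha_{E,k}|^2N^3}.
\end{equation*}
This is exactly the right-hand side of \eqref{eq:mse_eve}. We can cross-check it against the $\sigma^2$ term of \eqref{eq:ecrb} with $\kurt=1$.

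\textbf{Imperfect reference.} For finite $\gamma_r$, the reference is $c\,\mathbf{x}+\mathbf{e}$, where $\mathbf{e}$ combines the NLoS term and the reference noise. We need $\mathbf{e}$ to be uncorrelated with $\mathbf{x}$ after LoS compensation; this holds because $\tilde{\mathbf{h}}_{E,\rm NLoS}$ has zero mean and is independent of $\mathbf{x}$. The MF output then gains the extra terms $\mathbf{a}_E^T\mathbf{H}_E\mathbf{X}\odot\mathbf{e}^*$ and $\mathbf{z}_{E,s}\odot\mathbf{e}^*$. These are zero mean and uncorrelated with $\mathbf{w}$, so they can only increase the effective disturbance variance, and they vanish as $\gamma_r\to\infty$. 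This gives ``$\geq$'' with equality in the limit.

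\textbf{Main obstacle.} The delicate step is justifying that the $j=k$ self-clutter and the cross-terms are negligible at leading order. For the self term, it is the $N^{-4}$ argument above. Its rigorous version requires showing that the fluctuation $\sum_n(|x_n|^2-1)$ of the peak amplitude enters the error only at second order. For the cross-terms, resolvability makes the deterministic mainlobe leakage of the other scatterers' $\mathbf{h}(\tau_{E,j})$ onto $\dot{\mathbf{h}}(\tau_{E,k})$ vanish. We would handle both by a first-order Taylor expansion, discarding terms of order $O(N^{-4})$.
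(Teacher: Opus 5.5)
Your route matches the paper's. You write the MF output as the clean delay response plus a disturbance whose power is set by $\mathrm{Var}(|x_n|^2)=\kurt-1$ and $\sigma_E^2$, and you drop cross terms between resolvable scatterers. You then linearize the peak search around $\tau_{E,k}$ and argue that an imperfect reference only adds nonnegative fluctuation power. The gap is the constant, and it is not cosmetic: your last display is false. Since $2\cdot 4\pi^2/12=2\pi^2/3$,
\[
\frac{\mathrm{Var}(w)}{2|\alpha_{E,k}|^2\cdot 4\pi^2\Delta f^2N^3/12}=\frac{3\,\mathrm{Var}(w)}{2\pi^2\Delta f^2|\alpha_{E,k}|^2N^3},
\]
which is four times the right-hand side of \eqref{eq:mse_eve}. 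Your centred linearization is the correct first-order analysis of $|\mathbf{h}^H(\tau)\mathbf{y}|^2$ when the phase of $\alpha_{E,k}$ is unknown; it reproduces the unknown-phase CRB. Carried through correctly, it gives an ideal-reference MSE equal to four times the bound. That still yields ``$\geq$'', but it contradicts rather than proves the claimed equality as $\gamma_r\to\infty$.

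The lemma's $8\pi^2$ comes only from the \emph{uncentred} norm $\|\dot{\mathbf{h}}\|^2=(2\pi\Delta f)^2\sum_{n=0}^{N-1}n^2\approx 4\pi^2\Delta f^2N^3/3$. The paper evaluates $\E[|\dot s|^2]/(2|\E[\ddot s]|^2)$ with $\sum_n n^2$ in both $\E[|\dot s|^2]$ and $|\E[\ddot s]|=(2\pi\Delta f)^2|\alpha_{E,k}|\sum_n n^2$. This is effectively a known-phase expansion: it ignores the terms generated by $\dot{\mathbf{h}}^H\mathbf{h}=j2\pi\Delta f\sum_n n\neq 0$, which are exactly the terms your projection onto $\dot{\mathbf{h}}_\perp$ accounts for. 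To obtain the statement as written, you must adopt that convention and use $\dot{\mathbf{h}}$ and $N^3/3$ instead of $\dot{\mathbf{h}}_\perp$ and $N^3/12$. Your proposed cross-check would have exposed this. The $\sigma^2$ term of \eqref{eq:ecrb} is $3\sigma^2/(8\pi^2\Delta f^2|\alpha_k|^2N^3)$, which is the $N^3/3$ value, not yours.

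There are two smaller points. First, the self term needs no $N^{-4}$ argument. In either linearization its first-order contribution is $|\alpha_{E,k}|^2\,\mathrm{Re}\{j2\pi\Delta f\sum_n c_n(|x_n|^2-1)\}$ with real $c_n$ (either $c_n=n$ or $c_n=n-\bar n$). This is identically zero because $|x_n|^2-1$ is real, which is why the paper's sum runs over $j\neq k$ only. Second, $w_n$ does not have per-subcarrier variance $(\kurt-1)\sum_j|\alpha_{E,j}|^2+\sigma_E^2$. Its variance is $(\kurt-1)\bigl|\sum_j\alpha_{E,j}[\mathbf{h}(\tau_{E,j})]_n\bigr|^2+\sigma_E^2$, and it is improper. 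So the form $(\kurt-1)\sum_{j\neq k}|\alpha_{E,j}|^2$, and the factor $1/2$ when you take the real part, only emerge after projecting onto the derivative direction and discarding oscillating cross terms for resolvable delays. That is the same approximation the paper makes, so you should phrase the variance that way.
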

\begin{IEEEproof}
See Appendix~\ref{app:mse}.
\end{IEEEproof}

\begin{lemma}[Alice's ranging MSE]\label{thm:alice_mse}
Under RF processing with the estimator \eqref{eq:alice_est}, in the high-SNR regime and for resolvable scatterers, the delay estimation MSE of the $k$-th scatterer at Alice is
\begin{equation}\label{eq:mse_alice}
\MSE_{{\rm Alice},k}(\cS) = \frac{3\,\sigma_A^2\,\ism}{8\pi^2\Delta f^2\,|\alpha_{A,k}|^2\,N^3}.
\end{equation}
\end{lemma}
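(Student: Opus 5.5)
The plan is to reduce Alice's problem to single-tone frequency estimation in white noise and then apply the standard high-SNR perturbation (Taylor) analysis of the periodogram peak. The reduction comes straight from \eqref{eq:alice_rf}. After reciprocal filtering, the payload is removed exactly, so
\[
\mathbf{y}_{A,\rm RF}^T = \sum_{j=1}^{K_A}\alpha_{A,j}\mathbf{h}(\tau_{A,j}) + \mathbf{z}_{A,\rm RF}.
\]
Here $z_{A,{\rm RF},n}=z_{A,n}/x_n$. The numerator $z_{A,n}$ is circular Gaussian and independent of $x_n$, and the symbols are i.i.d. across subcarriers. Hence the entries of $\mathbf{z}_{A,\rm RF}$ are zero mean, mutually uncorrelated and circularly symmetric, with variance $\sigma_A^2\ism$ by \eqref{eq:rf_noise}. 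They are not Gaussian, being a Gaussian scale mixture, but the first-order MSE analysis only needs second-order statistics. So the problem becomes a sum of deterministic complex exponentials in white noise of variance $\sigma_A^2\ism$.

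Next I linearize the estimator \eqref{eq:alice_est} around the true delay. Define
\[
J(\tau)=\big|\mathbf{h}^H(\tau)\mathbf{y}_{A,\rm RF}^T\big|^2 .
\]
The error $\delta=\hat\tau_{A,k}-\tau_{A,k}$ satisfies $J'(\hat\tau)=0$, which gives $\delta\approx -J'(\tau_{A,k})/J''(\tau_{A,k})$. I would keep in $J'$ only the terms that are first order in the noise. The resolvability assumption is used to drop the cross-terms from the other scatterers $j\neq k$. With the dictionary resolving the scatterers, the leakage $\mathbf{h}^H(\tau_{A,k})\mathbf{h}(\tau_{A,j})$ and its derivative are negligible relative to $N$ and $N^2$; in the on-grid case they vanish exactly. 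This leaves the classical single-tone problem.

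Writing $\omega=2\pi\Delta f\tau$ and using centered subcarrier indices, which only changes an irrelevant phase reference, I would compute:
\begin{itemize}
\item $J''\approx -2|\alpha_{A,k}|^2 N\sum_n n^2(2\pi\Delta f)^2$;
\item $J'$ equals $2(2\pi\Delta f)\,\Re\{\cdot\}$ of a weighted noise sum $\alpha_{A,k}^*\sum_n n\, z_{A,{\rm RF},n}e^{j\omega n}$, times $N$.
\end{itemize}
Taking expectations with the white-noise variance, and using $\sum n^2\approx N^3/12$, gives
\[
\E[\delta^2]=\frac{\sigma_A^2\ism}{2|\alpha_{A,k}|^2(2\pi\Delta f)^2\sum n^2}
=\frac{3\sigma_A^2\ism}{8\pi^2\Delta f^2|\alpha_{A,k}|^2N^3}.
\]
The factor $1/2$ comes from taking the real part of a circular variable. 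Because $\E[|x_n|^{-2}]$ enters only through the variance, the constellation dependence factors out as $\ism$.

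The main obstacle is justifying the linearization rigorously. This means showing that second-order noise terms and inter-scatterer leakage are negligible at high SNR. It also means handling the fact that $\ism$ must be finite, so no zero-amplitude symbol is allowed. A further point is that the RF noise is non-Gaussian and heavy-tailed through $1/|x_n|$. I would handle the last point by conditioning on $\mathbf{x}$: given $\mathbf{x}$ the noise is Gaussian with variances $\sigma_A^2/|x_n|^2$. I would then average the conditional first-order MSE over the symbols, which reproduces $\ism$ exactly, because the weights $n^2$ are deterministic and independent of $x_n$.
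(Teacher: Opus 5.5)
Your route is essentially the paper's. You expand the peak-search objective to second order about $\tau_{A,k}$, use resolvability to discard the other scatterers, and let reciprocal filtering leave white noise of variance $\sigma_A^2\ism$. Your intermediate expression $\E[\delta^2]=\sigma_A^2\ism/\bigl(2|\alpha_{A,k}|^2(2\pi\Delta f)^2\sum_n n^2\bigr)$ has exactly the form Appendix~\ref{app:mse} obtains from $\E[|\dot s|^2]/(2|\E[\ddot s]|^2)$.

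\textbf{The gap is your final equality, which does not hold.} With centered indices, $\sum_n n^2=N(N^2-1)/12\approx N^3/12$, so your expression evaluates to
\[
\frac{6\,\sigma_A^2\ism}{(2\pi\Delta f)^2|\alpha_{A,k}|^2N(N^2-1)}\approx\frac{3\,\sigma_A^2\ism}{2\pi^2\Delta f^2|\alpha_{A,k}|^2N^3}.
\]
This is four times the right-hand side of \eqref{eq:mse_alice}. The paper's constant $3/(8\pi^2)$ comes from evaluating the same intermediate with the uncentered indices of $\mathbf{h}(\tau)$, i.e., $\sum_{n=0}^{N-1}n^2\approx N^3/3$.

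You cannot repair this inside your argument simply by dropping the centering. As you note, $|\mathbf{h}^H(\tau)\mathbf{y}_{A,\rm RF}^T|$ is invariant to the index origin. If you linearize with $n=0,\dots,N-1$, two extra terms survive because $s_0'\propto j\alpha_{A,k}\sum_n n\neq 0$: the first-order term $2\Re\{w^*s_0'\}$ in $J'$, and the term $2|s_0'|^2$ in $J''$. Together they re-center the weights to $n-(N-1)/2$ and return $N(N^2-1)/12$. So your linearization, done correctly, yields the standard unknown-phase, CRB-attaining result. The paper's constant instead neglects the delay--phase coupling; it coincides with a known-phase Fisher information computed with $\sum_{n=0}^{N-1}n^2$.

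As written, your proposal establishes the $\ism$ scaling and the $N^{-3}$ law, but not the stated equality. You have two options:
\begin{itemize}
\item Explicitly adopt the paper's simplified second-order formula with uncentered indices to obtain \eqref{eq:mse_alice} literally. That convention is used identically in Lemma~\ref{thm:eve_mse} and in \eqref{eq:ecrb}, so the common factor cancels in the normalized privacy metric $\Delta\bar{\varepsilon}_k$.
\item State the lemma with the constant $3/(2\pi^2)$ in place of $3/(8\pi^2)$.
\end{itemize}
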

\begin{IEEEproof}
See Appendix~\ref{app:mse}.
\end{IEEEproof}

\noindent \textbf{Remark 2.} Lemmas~\ref{thm:eve_mse} and \ref{thm:alice_mse} clarify the asymmetry of Table~\ref{tab:operational}. Eve's accuracy is limited by a multi-target or clutter interference floor $(\kurt-1)\sum_{j\neq k}|\alpha_{E,j}|^2$, seeded by every other scatterer in its field of view and scaled by the constellation kurtosis. Moreover, since the reference-induced terms in \eqref{eq:eve_ref_eff} are uncontrollable by the TX and strictly harmful to Eve, \eqref{eq:mse_eve} is a lower bound, so any privacy guarantee built on it is conservative. On the other hand, Alice is immune to the scene. RF equalization converts all data-dependent sidelobes into a uniform noise amplification governed solely by $\ism$, at an SNR cost of $10\log_{10}\ism$~dB relative to the PSK baseline. The design task that follows is therefore to grow $\kurt$ aggressively while keeping $\ism$ small, which is a moment decoupling that only the constellation geometry can perform.

\section{Sensing Privacy: Definition and Properties}\label{sec:privacy}
In this section, we develop the core theory of sensing privacy at the modulation level. Specifically, we first define the privacy metric and characterize it in closed form, then establish its fundamental properties and invariances, derive the Eve-aware and Eve-agnostic design objectives, and finally analyze a canonical constellation family that reveals the geometry of privacy-optimal signaling.

\subsection{Sensing Privacy Metric and Closed-Form Characterization}
To quantify the level of protection for sensing privacy, we define the sensing privacy as the ranging MSE gap between the eavesdropper and the legitimate receiver, as follows.
\begin{definition}[Sensing privacy]\label{def:privacy}
For a target of interest indexed by $k$ at both receivers, the sensing privacy of the constellation $\cS$ is defined as the ranging MSE gap between Eve and Alice, given by
\begin{equation}\label{eq:privacy_def}
\Delta\varepsilon_k(\cS) \triangleq \MSE_{{\rm Eve},k}(\cS) - \MSE_{{\rm Alice},k}(\cS).
\end{equation}
\end{definition}
A large positive $\Delta\varepsilon_k$ indicates that Eve's range estimates are substantially less accurate than Alice's, i.e., the target's location is effectively kept private. Substituting \eqref{eq:mse_eve} and \eqref{eq:mse_alice} into \eqref{eq:privacy_def} and normalizing by the common factor $3/(8\pi^2\Delta f^2 N^3)$, we obtain the lower-bound of the normalized MSE gap, given by
\begin{align}
\Delta\bar{\varepsilon}_k(\cS)
&= (\kurt-1)\underbrace{\frac{\sum_{j\neq k}^{K_E}|\alpha_{E,j}|^2}{|\alpha_{E,k}|^2}}_{c_0\,=\,1/\SCR_E}
+ \underbrace{\frac{\sigma_E^2}{|\alpha_{E,k}|^2}}_{c_1\,=\,1/\SNR_E}
- \ism\underbrace{\frac{\sigma_A^2}{|\alpha_{A,k}|^2}}_{c_2\,=\,1/\SNR_A} \nonumber\\
&= (\kurt-1)\,c_0 + c_1 - \ism\,c_2, \label{eq:privacy_closed}
\end{align}
where $\SCR_E$ denotes the signal-to-clutter ratio (SCR) of the target of interest at Eve, and $\SNR_E$ and $\SNR_A$ denote the target SNRs at Eve and Alice, respectively. Expression \eqref{eq:privacy_closed} decouples the role of the propagation environment, which enters only through the scalars $(c_0,c_1,c_2)$, from that of the constellation, which enters only through the moment pair $(\kurt,\ism)$.

\begin{figure*}[t]
\centering
\includegraphics[width=1.8\columnwidth]{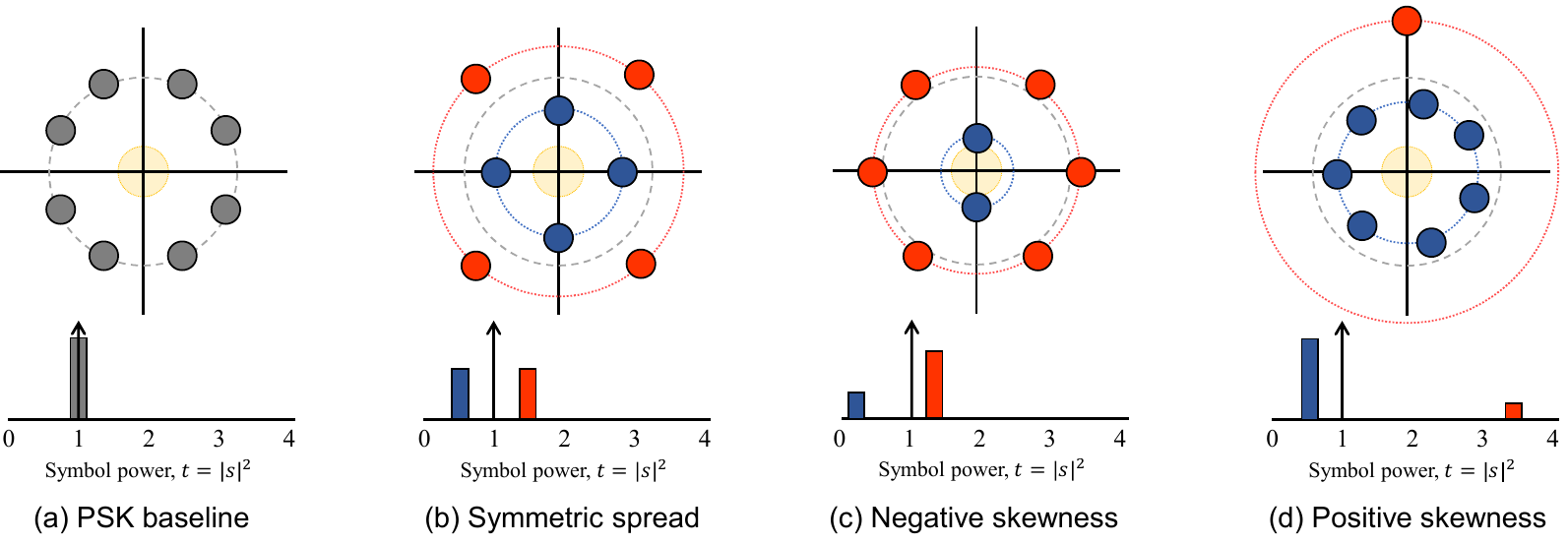}
\caption{Illustration of constellations and their symbol-power distributions relevant to sensing privacy: (a) PSK baseline, (b) symmetric power spread, (c) negative power skewness, and (d) positive power skewness.}
\label{fig:skew}
\end{figure*}

Building on the above definition, we make three structural observations as follows:
\begin{enumerate}
\item The effect of the constellation kurtosis $\kurt$ is amplified by the Eve's clutter environment through $c_0$: the denser the scattering scene around the protected target from Eve's viewpoint, the more privacy each unit of kurtosis buys. Cluttered urban and indoor scenes, exactly where sensing privacy matters most, are thus intrinsically favorable for protecting sensing privacy.
\item The term $c_1$ is independent of $\cS$ and represents a fixed privacy offset due to Eve's thermal noise.
\item The cost term $\ism c_2$ is paid at Alice regardless of Eve's environment. It is suppressed by a strong legitimate link, i.e., a small $c_2$, which can be secured by Alice's coherent processing gain over multiple OFDM symbols.
\end{enumerate}
An immediate consequence of \eqref{eq:privacy_closed} is the positive-privacy condition when $c_0 \neq 0$, which is given by
\begin{equation}\label{eq:pos_privacy}
\Delta\bar{\varepsilon}_k > 0 \;\Longleftrightarrow\; \kurt - 1 > \frac{\ism\,c_2 - c_1}{c_0},
\end{equation}
which quantifies how much kurtosis must be purchased, per unit of noise enhancement $\ism$, to overcome an adverse link-budget gap $c_2>c_1$, in which Eve enjoys a better raw link than Alice. It is worth noting that the defined sensing privacy represents the worst case for Alice, or equivalently the best case for Eve, since the SINR degradation of Eve's reference link further deteriorates Eve's sensing performance beyond \eqref{eq:mse_eve}, as described in Lemma~\ref{thm:eve_mse}.

\subsection{Fundamental Properties of Sensing Privacy}
Building on the closed-form characterization of sensing privacy in \eqref{eq:privacy_closed}, we investigate the level of privacy achievable with conventional constellations and identify which departures from them yield meaningful gains. We address these questions through a series of properties, beginning with the observation that the two governing moments are lower-bounded by unity.
\begin{lemma}\label{lem:jensen}
For any zero-mean, unit-power constellation, it holds that $\kurt\geq 1$ and $\ism\geq 1$, with equality in both if and only if $|s_m|=1$ for all $m$.
\end{lemma}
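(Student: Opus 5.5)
The plan is to work with the symbol-power random variable $P=|x_n|^2$, where $x_n$ is uniform on $\cS$. The unit-power normalization reads $\E[P]=1$. In this notation $\kurt=\E[P^2]$ and $\ism=\E[P^{-1}]$, so both claims reduce to Jensen's inequality for $P$. The zero-mean condition is not needed. We implicitly assume $s_m\neq 0$ for all $m$, which $\ism<\infty$ already presumes; if some $s_m=0$, then $\ism=\infty$ and the inequality is trivial.

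For the kurtosis, apply Jensen with the strictly convex function $t\mapsto t^2$. Equivalently, use the variance identity $\kurt-1=\E[P^2]-(\E[P])^2=\mathrm{Var}(P)\ge 0$. Equality holds if and only if $P$ is almost surely constant. Since $P$ is uniform over the $M$ points, this means $P\equiv\E[P]=1$, i.e., $|s_m|=1$ for all $m$.

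For the inverse moment, apply Jensen with the strictly convex function $t\mapsto 1/t$ on $(0,\infty)$:
\begin{equation*}
\ism=\E[P^{-1}]\ge \frac{1}{\E[P]}=1 .
\end{equation*}
Alternatively, use Cauchy--Schwarz, $1=(\E[P^{1/2}P^{-1/2}])^2\le \E[P]\,\E[P^{-1}]=\ism$. Strict convexity again forces equality exactly when $P$ is constant, hence $|s_m|=1$ for all $m$.

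Finally, I combine the two equality cases. If every $|s_m|=1$, then $P\equiv 1$ and both moments equal $1$. Conversely, equality in either inequality alone already forces $P\equiv 1$, so the "if and only if" holds for both simultaneously. There is no real obstacle here. The only care needed is stating the strict-convexity equality condition for a finite uniform distribution, namely that all atoms are equal to the mean.
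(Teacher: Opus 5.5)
Your proof is correct and follows the paper's route: Jensen's inequality applied to the symbol power $t=|x_n|^2$ with $\E[t]=1$ and the convex maps $t\mapsto t^2$ and $t\mapsto 1/t$, where equality forces $t\equiv\E[t]=1$. Your additional remarks are valid refinements that leave the core argument unchanged: the variance identity, the Cauchy--Schwarz alternative, the observation that the zero-mean assumption is not needed, and the treatment of the case $s_m=0$.
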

\begin{IEEEproof}
Let $t=|x_n|^2$. Since the constellation has unit average power, we have $\E[t]=\E[|x_n|^2]=1$. First, because $t^2$ is convex, Jensen's inequality gives $\kurt=\E[t^2]\geq \E[t]^2=1.$ Similarly, since $t^{-1}$ is convex for $t>0$, $\ism=\E[t^{-1}]\geq \frac{1}{\E[t]}=1$. Equality in either Jensen inequality holds only when $t$ is constant. Together with $\E[t]=1$, this requires $t=1$, or equivalently, $|s_m|=1$ for all constellation points.
\end{IEEEproof}
Lemma~\ref{lem:jensen} shows that any amplitude variation in the constellation has competing effects: it increases Eve's data-dependent sidelobe floor through $\kurt$, while also increasing Alice's noise enhancement through $\ism$. The key design question is therefore whether the privacy gain from the former can outweigh the performance loss caused by the latter. Unit-amplitude signaling is the unique case in which both effects are minimized, and its privacy performance is characterized as follows.

\begin{corollary}\label{cor:psk}
For any unit-amplitude constellation, $\Delta\bar{\varepsilon}_k = c_1 - c_2$. Thus, the sensing privacy reduces to the link-budget gap between Eve and Alice and cannot be further improved through constellation design. In particular, if Eve has a comparable or more favorable sensing geometry than Alice ($c_1\leq c_2$), PSK signaling provides no sensing privacy.
\end{corollary}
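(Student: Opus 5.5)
The plan is to substitute directly into the closed-form expression \eqref{eq:privacy_closed} and use Lemma~\ref{lem:jensen} to pin down the two constellation moments. For a unit-amplitude constellation, $|s_m|=1$ for all $m$. Hence $\kurt=\frac{1}{M}\sum_m |s_m|^4=1$ and $\ism=\frac{1}{M}\sum_m|s_m|^{-2}=1$; this is the equality case of Lemma~\ref{lem:jensen}, though it also follows from the definitions in \eqref{eq:moments}. Plugging these values in gives $\Delta\bar{\varepsilon}_k=(1-1)c_0+c_1-1\cdot c_2=c_1-c_2$. The clutter term vanishes regardless of $c_0$, so Eve's scene plays no role.

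The second claim is that privacy ``cannot be further improved through constellation design'' within the unit-amplitude class. I will argue it in two ways. First, $c_1-c_2$ is fixed by the propagation environment alone, through $\SNR_E$ and $\SNR_A$, and does not depend on $\cS$. Every unit-amplitude $\cS$ has the same moment pair $(1,1)$, so any choice of phases, including any $M$-PSK, yields the identical value. Second, I will note that the only remaining freedom inside \eqref{eq:privacy_closed} is the moment pair. Departing from $(1,1)$ requires leaving the unit-amplitude class, by Lemma~\ref{lem:jensen}, and I will state this as the interpretation of the claim.

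For the final assertion, if $c_1\le c_2$ then $\Delta\bar{\varepsilon}_k=c_1-c_2\le 0$, so the lower bound on the MSE gap is nonpositive. Since $\Delta\bar{\varepsilon}_k$ is the normalized version of the bound from Lemma~\ref{thm:eve_mse}, I will be careful to phrase the conclusion in terms of this guaranteed privacy level: PSK provides no positive guaranteed sensing privacy, with equality in the ideal-reference regime $\gamma_r\to\infty$. The only real subtlety, rather than an obstacle, is exactly this precise reading of ``cannot be improved'' and ``no privacy'' given that the metric is a lower bound. The algebra itself is immediate.
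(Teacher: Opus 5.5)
Your proposal is correct and follows the paper's route: the paper gives no separate proof because the corollary follows at once from substituting the equality case $\kurt=\ism=1$ of Lemma~\ref{lem:jensen} into \eqref{eq:privacy_closed}, which removes the $c_0$ term and leaves $\Delta\bar{\varepsilon}_k=c_1-c_2$ for every unit-amplitude constellation. Your caveat is consistent with the paper's own remark that the metric represents the best case for Eve: since \eqref{eq:privacy_closed} is a lower bound, tight as $\gamma_r\to\infty$, ``no sensing privacy'' should be read as ``no positive guaranteed privacy.''
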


Corollary~\ref{cor:psk} highlights an important limitation of conventional PSK signaling. Although PSK is sensing-optimal for both MF and RF receivers \cite{LiuIceberg2025,Han2026OFDMISAC}, it provides no inherent sensing privacy, since Eve benefits from the same sidelobe-free range profile as Alice. Improving sensing privacy therefore requires a departure from unit-amplitude signaling. To determine which amplitude variations are beneficial, we examine the Eve-agnostic moment gap $\kurt-\ism$ for small power fluctuations around the unit-amplitude point, leading to the following result.

\begin{proposition}\label{prop:skewness}
Let $t=|x_n|^2$ denote the instantaneous symbol power with $\E[t]=1$, and define the power fluctuation as $u=t-1$. Then, the intrinsic moment gap admits the exact decomposition
\begin{equation}\label{eq:exact_gap}
\kurt - \ism = \E\left[u^3\right] - \E\!\left[\frac{u^4}{t}\right],
\end{equation}
where the first term represents the unnormalized skewness of the symbol-power distribution, while the second term is a nonnegative residual. Consequently,
\begin{equation}\label{eq:skew_necessary}
\kurt - \ism \leq \E\left[u^3\right],
\end{equation}
which shows that positive power skewness, $\E[u^3]>0$, is a necessary condition for achieving a positive moment gap. Furthermore, when the fourth- and higher-order fluctuations are negligible, the residual term in \eqref{eq:exact_gap} becomes negligible, and the moment gap is well approximated by the skewness alone as $\kurt - \ism \approx \E\left[u^3\right].$
\end{proposition}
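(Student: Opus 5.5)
The plan is a purely algebraic identity in the scalar variable $t=|x_n|^2>0$, followed by taking expectations under the constraint $\E[t]=1$. The key is to write $t^{-1}$ via the exact finite geometric expansion in $u=t-1$. Since $t=1+u$, we have the exact identity
\begin{equation*}
\frac{1}{1+u} = 1 - u + u^2 - u^3 + \frac{u^4}{1+u},
\end{equation*}
which follows from $(1+u)(1-u+u^2-u^3)=1-u^4$. Meanwhile $t^2=(1+u)^2 = 1+2u+u^2$. Subtracting gives the pointwise identity
\begin{equation*}
t^2 - t^{-1} = 3u + u^3 - \frac{u^4}{t}.
\end{equation*}

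Next I take expectations term by term over the uniform distribution on $\cS$. All moments are finite sums, and $t>0$ because $\ism$ is only defined when no constellation point sits at the origin. Unit average power gives $\E[u]=\E[t]-1=0$, so the linear term vanishes. Using the moment definitions in \eqref{eq:moments}, $\kurt=\E[t^2]$ and $\ism=\E[t^{-1}]$, this yields \eqref{eq:exact_gap} directly. The only point needing care is verifying the cancellation of the $u^2$ terms, which is exactly why the difference $\kurt-\ism$ begins at third order. I will check this expansion carefully, since an error there would misidentify the leading term.

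For the consequences, note that $u^4\ge 0$ and $t>0$, so the residual $\E[u^4/t]$ is nonnegative, which gives \eqref{eq:skew_necessary}. If $\kurt-\ism>0$, then $\E[u^3]\ge \kurt-\ism+\E[u^4/t]>0$, which proves necessity. Moreover, the residual vanishes if and only if $u\equiv 0$, which recovers the PSK equality case of Lemma~\ref{lem:jensen}.

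For the approximation statement, I will bound the residual by $\E[u^4]/t_{\min}$. Here $t_{\min}=\min_m|s_m|^2$ is the smallest symbol power, so the residual is $O(\E[u^4])$ whenever the power fluctuations are small and bounded away from zero power. It is therefore negligible relative to $\E[u^3]$ once the fourth-order fluctuation is negligible, which gives $\kurt-\ism\approx\E[u^3]$. No step is a genuine obstacle; the main one is simply stating the approximation condition precisely, namely requiring $t$ bounded away from zero.
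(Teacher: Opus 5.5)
Your proposal is correct and follows essentially the same route as the paper: both rest on the exact identity $\frac{1}{1+u}=1-u+u^2-u^3+\frac{u^4}{1+u}$ together with $\E[u]=0$. The only cosmetic difference is that you subtract pointwise before taking expectations, whereas the paper first records $\kurt=1+\E[u^2]$ and $\ism=1+\E[u^2]-\E[u^3]+\E[u^4/t]$ separately and reuses them in Corollary~\ref{cor:flatness}. Your bound $\E[u^4/t]\le \E[u^4]/t_{\min}$ is a welcome sharpening of the paper's informal approximation claim.
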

\begin{IEEEproof}
See Appendix~\ref{app:skewness}.
\end{IEEEproof}

Proposition~\ref{prop:skewness} provides a useful insight into the role of symbol-power shaping in sensing privacy: the moment gap is governed not simply by the spread of the symbol power, but by its asymmetry. The two terms in \eqref{eq:exact_gap} capture competing effects. The skewness term favors a positively skewed power distribution, in which a small fraction of high-power symbols contributes disproportionately to Eve's data-dependent sidelobes through $\kurt$, while the more frequent, moderately lower-power symbols incur a comparatively smaller noise-enhancement penalty at Alice. In contrast, the residual term $\E[u^4/t]$ always reduces the moment gap. Because it is inversely weighted by the instantaneous symbol power, this penalty becomes particularly large when constellation points approach the origin or when the power distribution deviates substantially from unit-amplitude signaling. Fig.~\ref{fig:skew} illustrates these different power profiles, ranging from unit-amplitude and symmetric spreading to negatively and positively skewed symbol-power distributions.

Now, the role of the power spread itself is clarified by the following corollary.
\begin{corollary}\label{cor:flatness}
Since $\kurt=1+\E[u^2]$ and $\ism=1+\E[u^2]-\E[u^3]+\E[u^4/t]$ hold exactly, substituting them into \eqref{eq:privacy_closed} yields the exact decomposition
\begin{equation}\label{eq:flatness}
\Delta\bar{\varepsilon}_k(\cS)
=
(c_1-c_2)
+(c_0-c_2)\,\E[u^2]
+c_2\,\E[u^3]
-c_2\,\E\!\left[\frac{u^4}{t}\right].
\end{equation}
\end{corollary}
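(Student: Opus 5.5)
The plan is to establish the two moment identities directly from $t=1+u$ and $\E[u]=0$, and then substitute them into \eqref{eq:privacy_closed}.

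For the kurtosis, expanding $t^2=(1+u)^2=1+2u+u^2$ and taking expectations gives $\kurt=\E[t^2]=1+\E[u^2]$, since $\E[u]=\E[t]-1=0$ by the unit-power normalization.

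For $\ism$, I will use an exact algebraic identity for $1/t$ rather than a Taylor expansion, so that no remainder term appears. The finite geometric series gives
\begin{equation*}
\frac{1}{1+u}=1-u+u^2-u^3+\frac{u^4}{1+u},
\end{equation*}
valid for all $u>-1$. This holds for every constellation point with $|s_m|>0$, which is already needed for $\ism$ to be finite. Taking expectations and using $\E[u]=0$ yields
\begin{equation*}
\ism=1+\E[u^2]-\E[u^3]+\E\!\left[\frac{u^4}{t}\right].
\end{equation*}
This is consistent with Proposition~\ref{prop:skewness}: subtracting it from $\kurt=1+\E[u^2]$ recovers \eqref{eq:exact_gap}. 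If Appendix~\ref{app:skewness} already establishes these identities, I will simply cite it.

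Next, substitute both moments into $\Delta\bar{\varepsilon}_k=(\kurt-1)c_0+c_1-\ism c_2$. The first term becomes $c_0\E[u^2]$. The last term becomes $-c_2-c_2\E[u^2]+c_2\E[u^3]-c_2\E[u^4/t]$. Collecting the constants into $c_1-c_2$ and the $\E[u^2]$ terms into $(c_0-c_2)\E[u^2]$ gives exactly \eqref{eq:flatness}.

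There is no real obstacle here; the only point needing care is using the exact remainder form $u^4/(1+u)$ instead of a truncated expansion. This keeps the decomposition exact, as the corollary claims, rather than approximate. I will also note briefly that $t>0$ is required, which the definition of $\ism$ already implies.
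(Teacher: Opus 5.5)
Your proposal is correct and follows the paper's argument: Appendix~\ref{app:skewness} derives $\kurt=1+\E[u^2]$ and $\ism=1+\E[u^2]-\E[u^3]+\E[u^4/t]$ from the same exact finite geometric-series identity with remainder $u^4/(1+u)$, and the corollary then follows by the direct substitution into \eqref{eq:privacy_closed} that you carry out. Your collection of terms is right, and your remark that $t>0$ is already implied by the finiteness of $\ism$ is a fair minor addition.
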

\begin{IEEEproof}
The moment expressions are established in Appendix~\ref{app:skewness}, and substituting them into \eqref{eq:privacy_closed} yields \eqref{eq:flatness}.
\end{IEEEproof}
Corollary~\ref{cor:flatness} shows that simply spreading the symbol amplitudes around the PSK point as shown in Fig.~\ref{fig:skew}(b) does not necessarily improve sensing privacy. For small power fluctuations, $\E[u^2]$, $\E[u^3]$, and $\E[u^4/t]$ contribute at the second, third, and fourth orders, respectively. The leading effect is therefore the power-spread term $(c_0-c_2)\E[u^2]$, whose sign depends on the sensing environment. When $c_0=c_2$, this second-order contribution cancels exactly, and the leading privacy variation is governed by the power asymmetry $\E[u^3]$. Thus, symmetric amplitude spreading provides no second-order privacy gain, whereas a positively skewed power distribution as shown in Fig.~\ref{fig:skew}(d), with a small number of high-power symbols and a larger number of moderately lower-power symbols, can improve privacy through the third-order term. This benefit is ultimately limited by the residual $c_2\E[u^4/t]$, which becomes increasingly unfavorable as the power variation grows or constellation points approach the origin. More generally, power spread is beneficial when $c_0>c_2$, while for $c_0<c_2$, sufficiently positive power skewness is required to overcome the unfavorable second-order effect.

\subsection{Eve-Aware and Eve-Agnostic Privacy Objectives}\label{sec:objectives}
Depending on the information available at the TX, we consider two secure constellation-design regimes.
\subsubsection{Eve-aware design}
The Eve-aware design assumes statistical or bounded knowledge of Eve's sensing operating conditions. Specifically, the TX knows or estimates the large-scale \textbf{target and clutter powers observed at Eve, together with Eve's effective receiver-noise power}, or equivalently, the expected SCR and sensing SNR required to determine the coefficients $c_0$, $c_1$, and $c_2$. Such information may be obtained from prior observations or specified over a prescribed worst-case eavesdropping region.

Under this knowledge, the TX can directly optimize the sensing-privacy metric in \eqref{eq:privacy_closed}. Since the term involving $c_1$ is independent of the constellation, the effective Eve-aware objective is
\begin{equation}\label{eq:obj_aware}
f_{\rm aw}(\cS) = (\kurt-1)c_0-\ism c_2.
\end{equation}
Thus, the optimal constellation explicitly adapts to Eve's statistical sensing environment. In particular, it balances the increase in Eve's data-dependent sensing error achieved through a larger $\kurt$ against the corresponding noise-enhancement penalty at Alice, characterized by $\ism$.

\subsubsection{Eve-agnostic design}
Under the practical assumption (A.5), neither $c_0$ nor $c_1$ is available at the TX. To obtain a constellation-design criterion that does not depend on Eve-specific information, we adopt the symmetric reference condition
\begin{equation}\label{eq:agnostic_cond}
c \triangleq c_0=c_1=c_2.
\end{equation}
This represents a neutral reference point, at which Eve's SCR and SNR are equal to Alice's SNR. Importantly, (27) is not an assumption on Eve's actual operating condition, but an Eve-agnostic reference used to isolate a purely constellation-dependent privacy criterion. Proposition 2 subsequently identifies sufficient conditions under which this reference provides a conservative privacy guarantee. 

Substituting \eqref{eq:agnostic_cond} into \eqref{eq:privacy_closed} yields
\begin{equation}
\Delta\bar{\varepsilon}_k=c\left(\kurt-\ism\right).
\end{equation}
Since $c>0$ is independent of the constellation, the Eve-agnostic objective reduces to the moment gap
\begin{equation}\label{eq:obj_agnostic}
f_{\rm ag}(\cS)=\kurt-\ism.
\end{equation}
Unlike the Eve-aware objective, $f_{\rm ag}(\cS)$ depends only on the constellation and requires no knowledge of Eve's sensing geometry or propagation environment. It therefore provides a simple criterion for practical constellation design when Eve-specific information is unavailable.

The following proposition establishes a condition, under which Eve-agnostic design provides a conservative sensing privacy guarantee.
\begin{proposition}\label{prop:conservative}
Let $\cS$ be any constellation satisfying $\kurt-\ism\geq0$. If the sensing environment satisfies $c_0\geq c_2, \;c_1\geq c_2,$ then the realized sensing privacy satisfies
\begin{equation}\label{eq:conservative}
\Delta\bar{\varepsilon}_k
\geq
c_2\left(\kurt-\ism\right)
\geq
0.
\end{equation}
\end{proposition}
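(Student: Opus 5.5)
The plan is to start from the closed form \eqref{eq:privacy_closed}, $\Delta\bar{\varepsilon}_k=(\kurt-1)c_0+c_1-\ism c_2$, and to lower-bound each Eve-side coefficient by $c_2$. The bound has to use the correct sign of each multiplier. Since $c_0$ appears multiplied by $\kurt-1$, it helps that Lemma~\ref{lem:jensen} gives $\kurt-1\geq 0$. Then $c_0\geq c_2$ yields $(\kurt-1)c_0\geq(\kurt-1)c_2$. The coefficient $c_1$ appears alone with a positive sign, so $c_1\geq c_2$ gives $c_1\geq c_2$ directly. The term $-\ism c_2$ does not involve Eve's coefficients and is left unchanged.

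Summing the three pieces gives
\begin{equation*}
\Delta\bar{\varepsilon}_k\geq(\kurt-1)c_2+c_2-\ism c_2=c_2(\kurt-\ism).
\end{equation*}
This is the first inequality in \eqref{eq:conservative}. The second inequality holds because $c_2=\sigma_A^2/|\alpha_{A,k}|^2>0$ and, by hypothesis, $\kurt-\ism\geq0$.

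Equivalently, the result can be read off Corollary~\ref{cor:flatness}: write $\Delta\bar{\varepsilon}_k-c_2(\kurt-\ism)=(c_1-c_2)+(c_0-c_2)\E[u^2]$. Both terms are nonnegative under the hypotheses, using $\E[u^2]=\kurt-1\geq0$. I would mention this as a one-line cross-check.

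There is no real obstacle here. The only point needing care is the sign of $\kurt-1$, which is what allows $c_0$ to be replaced by the smaller $c_2$. This step relies on Lemma~\ref{lem:jensen} and would fail for a hypothetical multiplier of negative sign. I would also note that the bound is tight at $c_0=c_1=c_2$, recovering the Eve-agnostic reference \eqref{eq:agnostic_cond}. This justifies calling that reference conservative.
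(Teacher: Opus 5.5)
Your proposal is correct and follows essentially the same route as the paper's proof: you use Lemma~\ref{lem:jensen} to get $\kurt-1\geq0$, replace $c_0$ and $c_1$ by $c_2$, and conclude from $\kurt-\ism\geq0$. Your explicit note that $c_2>0$, your cross-check via Corollary~\ref{cor:flatness}, and your tightness remark at $c_0=c_1=c_2$ are all valid.
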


\begin{IEEEproof}
From Lemma~\ref{lem:jensen}, $\kurt\geq1$. Therefore, under $c_0\geq c_2$,
\begin{equation}
(\kurt-1)c_0
\geq
(\kurt-1)c_2.
\end{equation}
Together with $c_1\geq c_2$, substituting these inequalities into \eqref{eq:privacy_closed} gives
\begin{align}
\Delta\bar{\varepsilon}_k & =
(\kurt-1)c_0+c_1-\ism c_2 \\
&\geq (\kurt-1)c_2+c_2-\ism c_2 =
c_2\left(\kurt-\ism\right).
\end{align}
The final inequality in \eqref{eq:conservative} follows from $\kurt-\ism\geq0$.
\end{IEEEproof}

\noindent \textbf{Remark 3.} Proposition~\ref{prop:conservative} shows that the symmetric reference condition is conservative whenever the combination of the sensing environment and Eve/Alice geometry satisfies $c_0\geq c_2$ and $c_1\geq c_2$. Under these conditions, any constellation with a nonnegative moment gap guarantees nonnegative sensing privacy, and the realized privacy is no smaller than the Eve-agnostic reference value. Such conditions may arise when Eve experiences stronger clutter, a less favorable sensing geometry, or additional reference-induced degradation. The above insights allow us to form principles of constellation design for sensing privacy without knowledge of Eve. Below, we employ these to design constellations for sensing privacy.

\vspace{-2mm}
\subsection{Canonical Two-Ring Characterization}\label{sec:tworing}
\begin{figure}[t]
\centering
\includegraphics[width=0.65\columnwidth]{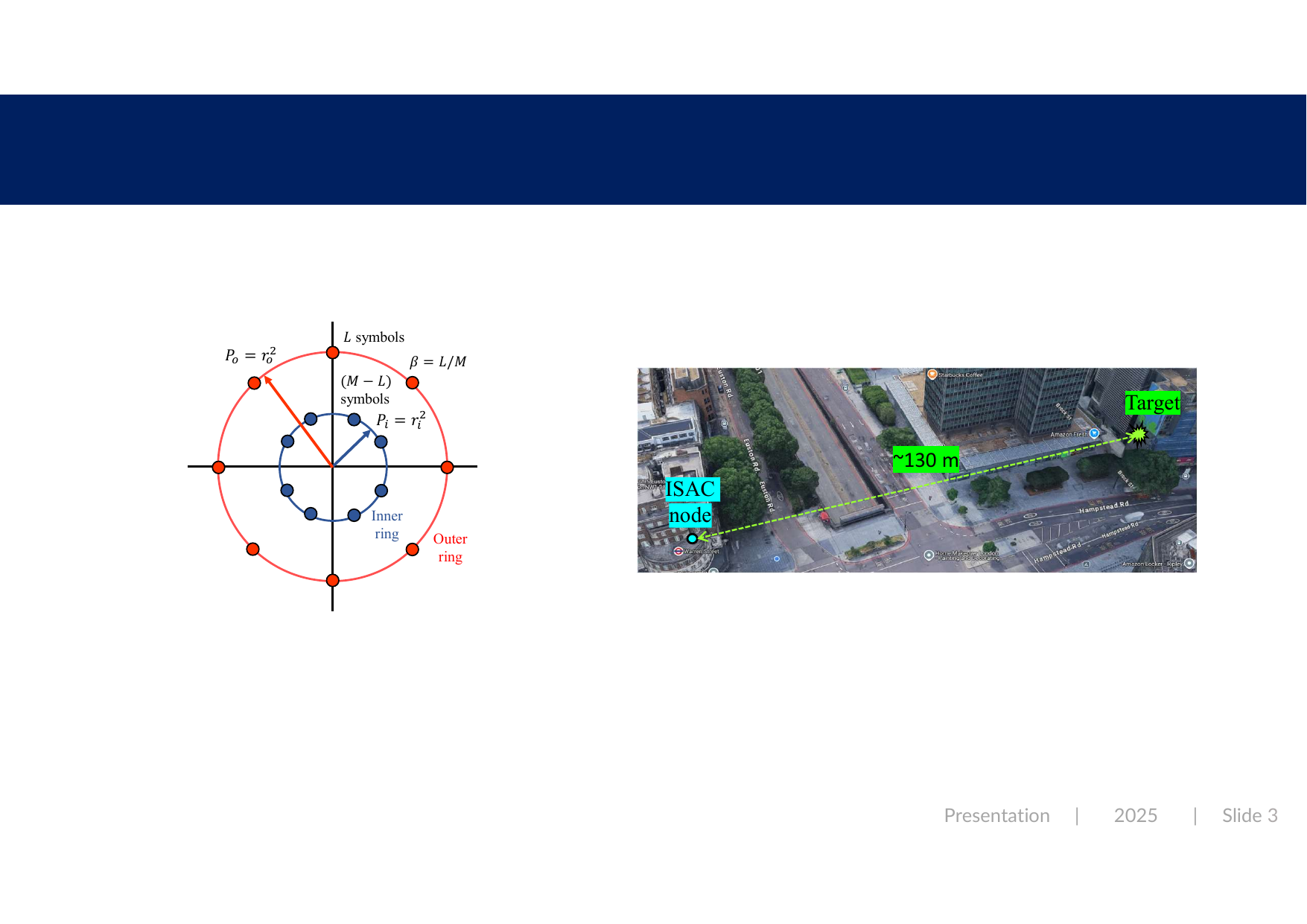}
\caption{Canonical two-ring $M$-ary constellation model: $L$ symbols are placed on an outer ring with power $P_o=r_o^2$, while the remaining $M-L$ symbols lie on an inner ring with power $P_i=r_i^2$, with outer-ring fraction $\beta=L/M$.}
\label{fig:tworing}
\end{figure}
Building on the moment gap $\kurt-\ism$ as the Eve-agnostic design objective, we now examine its geometry within a canonical two-ring constellation. This family captures the key trade-off identified above: positive power skewness can increase the moment gap, whereas excessively low-power symbols incur a large $\ism$ penalty. Specifically, let $L$ symbols lie on an outer ring with power $P_o=r_o^2$ and the remaining $M-L$ symbols on an inner ring with power $P_i=r_i^2$. The resulting two-ring geometry is illustrated in Fig.~\ref{fig:tworing}, where a fraction $\beta=L/M$ of the symbols occupies the outer ring and the remaining symbols lie on the inner ring. Then, the unit-power constraint is given by
\begin{equation}\label{eq:tworing_power}
\beta P_o + (1-\beta) P_i = 1, \qquad 0<P_i<1<P_o .
\end{equation}
The corresponding moments are
\begin{align}\label{eq:tworing_moments}
\kurt & = \beta P_o^2 + (1-\beta)P_i^2,\\
\ism & = \frac{\beta}{P_o} + \frac{1-\beta}{P_i}.
\end{align}

Let $G(\beta,P_o)=\kurt-\ism$ denote the Eve-agnostic moment gap. Using \eqref{eq:tworing_power}, the inner-ring power can be written as $P_i=(1-\beta P_o)/(1-\beta)$. The constraint $0<P_i<1<P_o$ then gives the feasible range $P_o\in(1,1/\beta)$, over which $G$ becomes a scalar function of $P_o$. Its behavior is characterized as follows.

\begin{proposition}\label{prop:tworing}
For an outer-ring fraction $\beta\in(0,1)$, the moment gap satisfies
\begin{equation}\label{eq:tworing_grad}
\frac{\partial G}{\partial P_o} = \beta\,(P_o-P_i)\left[2 - \frac{P_o+P_i}{P_o^2 P_i^2}\right],\quad P_o\in\left(1,\tfrac{1}{\beta}\right),
\end{equation}
from which the following properties follow:
\begin{enumerate}
\item[(i)] Near the PSK point $P_o=P_i=1$, letting $u_o=P_o-1$, the moment gap behaves as
\begin{equation}\label{eq:tworing_onset}
G \approx \beta\left(1-\frac{\beta^2}{(1-\beta)^2}\right)u_o^3 .
\end{equation}
Thus, a small two-ring departure increases the moment gap if and only if $\beta<1/2$.

\item[(ii)] As $P_o\to 1/\beta$, or equivalently $P_i\to0$, the moment gap decreases without bound, i.e., $G\to-\infty$.

\item[(iii)] For $\beta<1/2$, $G$ has a unique stationary point in the feasible interval, which is its global maximum. The optimal powers satisfy
\begin{equation}\label{eq:tworing_opt}
2\,(P_o^\star)^2 (P_i^\star)^2 = P_o^\star + P_i^\star,\qquad P_i^\star=\frac{1-\beta P_o^\star}{1-\beta}.
\end{equation}
\end{enumerate}
\end{proposition}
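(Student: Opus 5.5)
The plan is to reduce $G$ to a function of the single variable $P_o$ and then read (i)--(iii) off its derivative. Throughout, use $P_i=(1-\beta P_o)/(1-\beta)$, so $dP_i/dP_o=-\beta/(1-\beta)$. Write $G=\beta P_o^2+(1-\beta)P_i^2-\beta/P_o-(1-\beta)/P_i$. Differentiating by the chain rule, the factor $(1-\beta)$ in each inner-ring term cancels against $dP_i/dP_o$. This gives
\begin{equation*}
\frac{dG}{dP_o}=\beta\Bigl[2(P_o-P_i)+\frac{1}{P_o^2}-\frac{1}{P_i^2}\Bigr].
\end{equation*}
Factoring $\tfrac{1}{P_o^2}-\tfrac{1}{P_i^2}=-(P_o-P_i)(P_o+P_i)/(P_o^2P_i^2)$ then yields \eqref{eq:tworing_grad} directly.

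\textbf{Part (i).} Rather than integrating the gradient, I would invoke Proposition~\ref{prop:skewness} with $u$ taking value $u_o$ with probability $\beta$ and $u_i=-\beta u_o/(1-\beta)$ with probability $1-\beta$. Then
\begin{equation*}
\E[u^3]=\beta u_o^3+(1-\beta)u_i^3=\beta u_o^3\Bigl(1-\frac{\beta^2}{(1-\beta)^2}\Bigr).
\end{equation*}
The residual $\E[u^4/t]$ is $O(u_o^4)$, so \eqref{eq:tworing_onset} follows. The sign of the leading coefficient is positive iff $\beta<1-\beta$, i.e., $\beta<1/2$. As a consistency check, the bracket in \eqref{eq:tworing_grad} equals $O(u_o)$ near $P_o=P_i=1$, so $dG/dP_o=O(u_o^2)$, matching a cubic onset.

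\textbf{Part (ii).} As $P_o\to1/\beta$ we have $P_i\to0^+$. Every term of $G$ stays bounded except $-(1-\beta)/P_i$, which tends to $-\infty$.

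\textbf{Part (iii).} This is the main obstacle: establishing uniqueness of the stationary point. On the feasible interval $P_o>1>P_i$, so the factor $\beta(P_o-P_i)$ is positive. The sign of $G'$ is therefore that of $h(P_o)=2-(P_o+P_i)/(P_o^2P_i^2)$. I would show $h$ is strictly decreasing along the constraint. Write $(P_o+P_i)/(P_o^2P_i^2)=1/(P_oP_i^2)+1/(P_o^2P_i)$ and define $q(P_o)=P_oP_i$, with derivative
\begin{equation*}
q'=P_i-\frac{\beta P_o}{1-\beta}=\frac{1-2\beta P_o}{1-\beta}.
\end{equation*}
This derivative may change sign, so monotonicity of $q$ alone does not settle it. Instead I would differentiate each term directly:
\begin{equation*}
\frac{d}{dP_o}\bigl[P_o^{-1}P_i^{-2}\bigr]=-P_o^{-2}P_i^{-2}+\frac{2\beta}{1-\beta}\,P_o^{-1}P_i^{-3}.
\end{equation*}
The analogous term for $P_o^{-2}P_i^{-1}$ is $-2P_o^{-3}P_i^{-1}+\frac{\beta}{1-\beta}P_o^{-2}P_i^{-2}$. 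After multiplying by $(1-\beta)P_o^3P_i^3$, positivity of the sum reduces to checking $2\beta P_o^2+\beta P_oP_i-(1-\beta)P_oP_i-2(1-\beta)P_i^2>0$. This does not hold for all $\beta$.

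The fallback is a root-counting argument. By (i), $h$ is positive near $P_o=1^+$ when $\beta<1/2$, since $G$ increases there. By (ii), $h\to-\infty$ near $1/\beta$. So at least one zero exists. Clearing denominators, $h=0$ becomes a quartic $2P_o^2P_i^2-P_o-P_i=0$ in $P_o$. That quartic has the root $P_o=1$ with multiplicity one, because the bracket vanishes at PSK. I would then count sign changes, or apply Descartes' rule after the substitution, to show only one further root lies in $(1,1/\beta)$. A unique interior zero with $G'>0$ before it and $G'<0$ after makes it the global maximum, and the optimality conditions are $h=0$ together with the constraint, which is exactly \eqref{eq:tworing_opt}.
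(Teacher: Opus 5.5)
Your derivation of \eqref{eq:tworing_grad} and your arguments for (i) and (ii) are correct and follow the paper's. The gap is in (iii). You never prove that the stationary point is unique, and the ``global maximum'' claim rests entirely on that. Your monotonicity attempt fails, as you correctly note, because your sign expression is negative near the PSK point when $\beta<1/2$. The root-counting fallback is only announced, not carried out. It is also less direct in the variable $P_o$ than you suggest. The quartic $2P_o^2P_i^2-P_o-P_i$ equals $-1/\beta$ at $P_o=1/\beta$ and tends to $+\infty$ as $P_o\to\infty$, so it has another root beyond $1/\beta$, where $P_i<0$. After removing the factor $P_o-1$, Descartes' rule only caps the number of positive roots at two. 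You would still need a sign evaluation at $P_o=1/\beta$ to isolate the feasible root, and none of this is done.

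What you missed is that you do not need $h$ to be monotone; you only need $h'$ to change sign once, and your own computation already shows this. Divide your expression $2\beta P_o^2+\beta P_oP_i-(1-\beta)P_oP_i-2(1-\beta)P_i^2$ by $P_i^2>0$ and write $r=P_o/P_i$, which increases from $1$ to $\infty$ as $P_o$ runs over $(1,1/\beta)$. Then the sign of $\frac{d}{dP_o}\frac{P_o+P_i}{P_o^2P_i^2}$ is that of $2\beta r^2+(2\beta-1)r-2(1-\beta)$. This quadratic has a positive leading coefficient and a negative constant term, so it has a single positive root. It equals $6\beta-3<0$ at $r=1$, so that root lies in $r>1$. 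Hence $h$ strictly increases and then strictly decreases. Since $h=0$ at $P_o=1$ and $h\to-\infty$ as $P_o\to1/\beta$, it has exactly one interior zero, with $G'>0$ before it and $G'<0$ after. That gives both uniqueness and global maximality. This is the paper's argument: it parametrizes by $r$ from the outset and shows that $q(r)=(r+1)(1-\beta+\beta r)^3/r^2$ decreases from $q(1)=2$ and then increases to infinity. Your Descartes idea also works cleanly in this variable. With $r=1+s$, the equation $q(r)=2$ reduces, after removing the factor $s$, to $\beta^3s^3+(2\beta^3+3\beta^2)s^2+(6\beta^2+3\beta-2)s+6\beta-3=0$. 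For $\beta<1/2$ its coefficients have exactly one sign change.
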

\begin{IEEEproof}
See Appendix~\ref{app:tworing}.
\end{IEEEproof}

Proposition~\ref{prop:tworing} gives a simple geometric interpretation of the preceding results. Property (i) shows that a beneficial two-ring design requires fewer symbols on the high-power outer ring than on the inner ring ($\beta<1/2$), which is consistent with the positive power skewness identified in Proposition~\ref{prop:skewness}. However, increasing the ring separation indefinitely is not beneficial. As shown in property (ii), pushing the inner ring toward the origin causes the $\ism$ penalty to dominate and drives the moment gap to $-\infty$. These two effects lead to the finite optimum in property (iii), which balances a small number of high-power outer symbols against sufficiently powered inner symbols. For example, with $(M,L)=(8,1)$, \eqref{eq:tworing_opt} gives $P_o^\star\approx5.91$ and $P_i^\star\approx0.30$, yielding $\kurt\approx4.45$, $\ism\approx2.96$, and $G^\star\approx1.49$.

\noindent \textbf{Remark 4.} The finite optimum in Proposition~\ref{prop:tworing} holds for a fixed outer-ring fraction $\beta$. If $\beta$ is also allowed to approach zero, the Eve-agnostic moment gap becomes unbounded. Specifically, let $\Delta=\beta(P_o-1)$ be fixed while $\beta\to0$. Then $P_o=1+\Delta/\beta\to\infty$, while $P_i=1-\Delta/(1-\beta)$ remains finite for $0<\Delta<1$. Consequently, $\kurt$ grows as $\Delta^2/\beta$, whereas $\ism$ remains bounded, and hence $G=\kurt-\ism\to\infty$. This limit corresponds to increasingly rare, high-power symbols and therefore leads to excessive peak-to-average power ratio (PAPR) and poor communication performance. Communication constraints, such as the MED constraint, are therefore essential for obtaining a practically meaningful sensing-privacy-optimal constellation.

\section{Geometric Constellation Shaping for Sensing Privacy Protection}\label{sec:design}
We now incorporate the privacy objectives \eqref{eq:obj_aware} and \eqref{eq:obj_agnostic} into a geometric constellation shaping (GCS) framework that balances sensing privacy and communication reliability.

\subsection{Problem Formulation}
The design variables are the symbol coordinates $\cS=\{s_m\}_{m=1}^M\subset\C$. Following \cite{Han2026OFDMISAC}, communication reliability is characterized by the MED in \eqref{eq:MED}, while zero-mean, zero-pseudo-variance, and unit-power constraints are imposed to maintain the basic constellation structure. The Eve-aware design problem is formulated as
\begin{subequations}\label{eq:P1}
\begin{align}
\text{(P.1)}\quad \max_{\{s_m\}_{m=1}^M}\;\;
& \rho\,\frac{f_{\rm aw}(\cS)}{f_{\rm aw}^{\max}}
+ (1-\rho)\,\frac{\dmin}{\dmin^{\max}}
\label{eq:P1obj}\\
\text{s.t.}\quad
& |s_i-s_j| \geq \dmin,\quad \forall\,s_i\neq s_j\in\cS,
\label{eq:P1med}\\
& \sum_{m=1}^{M}s_m=0,\quad
  \sum_{m=1}^{M}s_m^2=0,\quad
  \frac{1}{M}\sum_{m=1}^{M}|s_m|^2=1.
\label{eq:P1stat}
\end{align}
\end{subequations}
where $\rho\in[0,1]$ controls the privacy--communication trade-off. The normalization constants $f_{\rm aw}^{\max}$ and $\dmin^{\max}$ are the corresponding single-objective values at $\rho=1$ and $\rho=0$, respectively. The three constraints in \eqref{eq:P1stat} impose zero mean, zero pseudo-variance, and unit average power, respectively.

The Eve-agnostic problem (P.2) follows by replacing $f_{\rm aw}$ with $f_{\rm ag}=\kurt-\ism$:
\begin{equation}\label{eq:P2}
\text{(P.2)}\quad
\max_{\{s_m\}_{m=1}^M}\;
\rho\,\frac{\kurt-\ism}{f_{\rm ag}^{\max}}
+(1-\rho)\,\frac{\dmin}{\dmin^{\max}}
\quad
\text{s.t.}\;\eqref{eq:P1med},\,\eqref{eq:P1stat}.
\end{equation}
At $\rho=0$, the formulation reduces to conventional MED-maximizing constellation shaping, whereas increasing $\rho$ progressively places more emphasis on sensing privacy. At $\rho=1$ \footnote{Since the privacy-only design at $\rho=1$ may favor excessively large outer-ring amplitudes, we use $\rho=0.99$ in practice to retain a small MED contribution and avoid overly peaky constellation geometries.}, the design is determined solely by the corresponding privacy objective for the given constellation size $M$.

Here, sensing privacy favors increasing $\kurt$ to degrade Eve's sensing performance while keeping $\ism$ sufficiently small to limit the noise enhancement at Alice. The resulting optimization therefore promotes the positively skewed symbol-power distributions identified in Proposition~\ref{prop:skewness}, while the MED objective preserves communication reliability.

\subsection{Solution via Reformulation and Multi-Start Search}

Problems (P.1) and (P.2) are non-convex due to the MED constraints, the equality constraints, and the nonlinear dependence of $\ism$ on the constellation coordinates. To simplify the numerical optimization, we eliminate the unit-power constraint through normalization and evaluate the MED directly from the constellation. Specifically, the constellation is rescaled to unit average power at every objective evaluation, while $\dmin(\cS)$ in \eqref{eq:MED} is computed directly rather than introduced as an auxiliary optimization variable.

Normalizing the Eve-aware privacy objective by $c_2$, we define $\gamma_0=c_0/c_2$. The resulting composite objective is
\begin{equation}\label{eq:composite_obj} 
F(\cS) = \rho\,\frac{(\kurt-1)\,\gamma_0-\ism}{\bar{f}} + (1-\rho)\,\frac{\dmin(\cS)}{\bar{d}}, 
\end{equation} 
where $\bar{f}$ and $\bar{d}$ are the normalization constants of \eqref{eq:P1obj}. The Eve-aware design therefore depends on the environment only through the relative coefficient $\gamma_0$, while the Eve-agnostic design is conducted by setting $\gamma_0=1$.

\begin{algorithm}[t] 
\caption{Multi-Start Search for Sensing-Secure GCS} 
\label{alg:gcs} 
    \begin{algorithmic}[1] 
    \Require Order $M$, weight $\rho$, ratio $\gamma_0$ ($\gamma_0{=}1$ for Eve-agnostic), number of starts $N_{\rm init}$ 
    \For{$i=1,\ldots,N_{\rm init}$} 
        \State Draw initial constellation $\cS^{(i)}_0$: random Gaussian points, $M$-QAM, and $M$-PSK
        \State Normalize $\cS^{(i)}_0$ to unit average power
        \State $\hat{\cS}^{(i)}\gets$ quasi-Newton maximization of $F$ in \eqref{eq:composite_obj} from $\cS^{(i)}_0$, with unit-power normalization applied at every evaluation
    \EndFor 
    \State \Return $\cS^{(i^\star)}$ with $i^\star=\arg\max_i F\big(\hat{\cS}^{(i)}\big)$ 
    \end{algorithmic} 
\end{algorithm} 

The reformulated problem is solved using a multi-start quasi-Newton search. Random initializations are supplemented with an $M$-QAM seed, and the solution achieving the largest objective value is retained, as summarized in Algorithm~\ref{alg:gcs}. No explicit minimum-amplitude constraint is imposed, since $\ism$ naturally penalizes constellation points approaching the origin. For the privacy-oriented design, we use $\rho=0.99$ rather than $\rho=1$ so that a small MED contribution is retained to discourage degenerate constellation geometries. The per-evaluation cost is dominated by the $\mathcal{O}(M^2)$ pairwise-distance calculations required to evaluate $\dmin(\cS)$. Since the optimization is performed offline, the resulting constellations can be stored as fixed mapping tables without additional run-time optimization.

\begin{figure}[t]
\centering
    \subfigure[]{\includegraphics[width=0.24\textwidth]{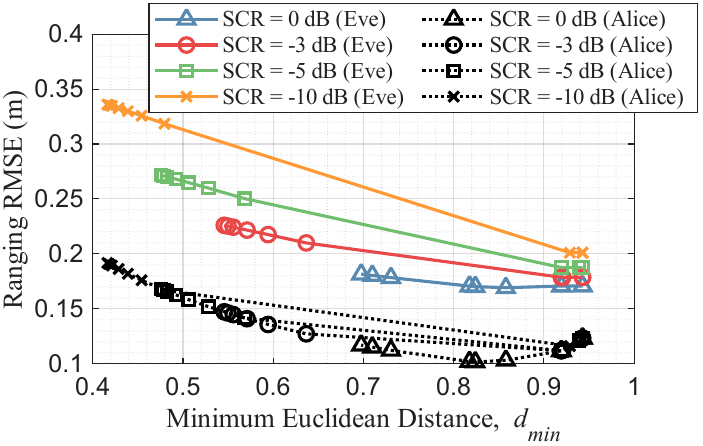}}
    \subfigure[]{\includegraphics[width=0.24\textwidth]{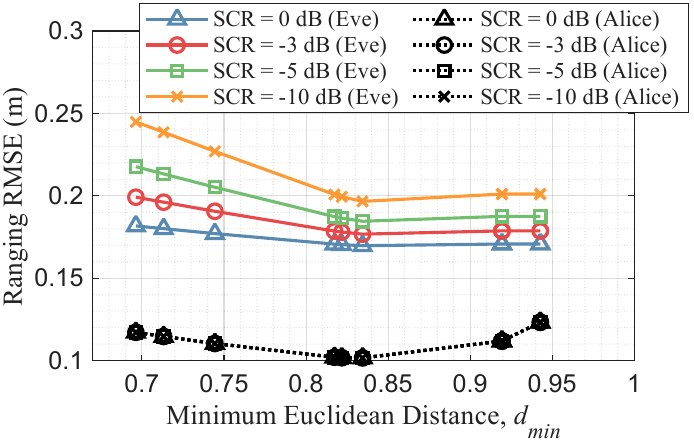}} 
    \caption{Ranging RMSEs of Eve and Alice under (a) Eve-aware GCS and (b) Eve-agnostic GCS.}
    \label{fig:R-RMSE}
\end{figure}

\begin{figure}[t]
    \centering
    \includegraphics[width=0.88\columnwidth]{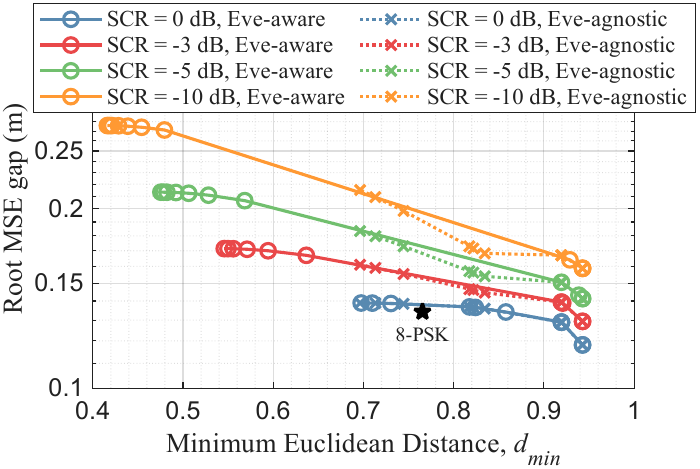}
    \caption{Privacy--communication trade-off: simulated ranging MSE gap between Eve and Alice versus the achieved MED for the proposed $8$-ary GCS designs, with $\SNR_A=0$~dB, $\SNR_E=-5$~dB, and $\SCR_E$ $\in\{0,-3,-5,-10\}$~dB}
\label{fig:tradeoff}
\end{figure}

\section{Simulation and Experimental Results}\label{sec:results}
In this section, the proposed framework is evaluated in two stages. Numerical simulations first quantify the privacy--communication trade-off and characterize the resulting constellation geometries under controlled conditions. An over-the-air experiment then validates the Eve-agnostic design in a practical propagation environment.

\subsection{Privacy--Communication Trade-off}
The simulation employs a CP-OFDM signal with $N=256$ subcarriers, equal subcarrier power allocation, and the designed $8$-ary constellations, so that the observed effects arise solely from the constellation geometry. The per-subcarrier target SNRs are set to $\SNR_A=1/c_2=0$~dB at Alice and $\SNR_E=1/c_1=-5$~dB at Eve, while Eve's SCR, $1/c_0$, is varied over $\{0,-3,-5,-10\}$~dB. Alice applies RF processing in \eqref{eq:alice_rf}, whereas Eve applies MF processing using its reference signal, followed by matrix-pencil-based range estimation. The constellations are obtained using Algorithm~\ref{alg:gcs} for both the Eve-aware design, with $\gamma_0$ matched to each SCR condition, and the Eve-agnostic design, with $\gamma_0=1$. Each result is averaged over $1000$ Monte-Carlo runs.

\begin{figure}[t]
\centering
    \subfigure[]{\includegraphics[width=0.20\textwidth]{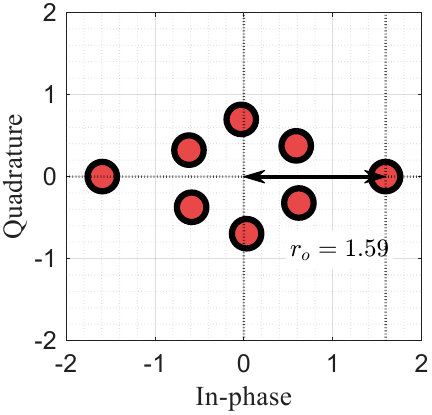}}
    \subfigure[]{\includegraphics[width=0.20\textwidth]{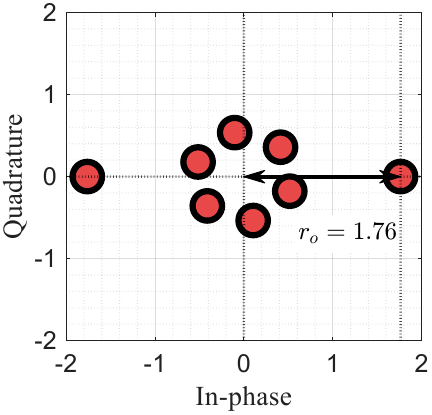}} \\
    \subfigure[]{\includegraphics[width=0.20\textwidth]{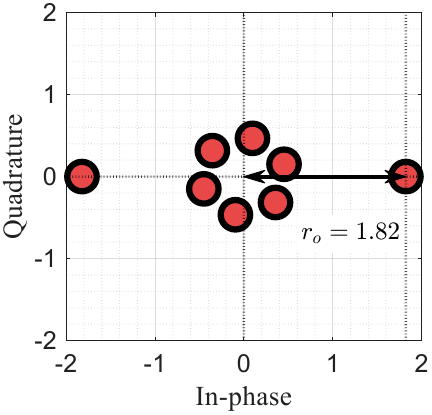}} 
    \subfigure[]{\includegraphics[width=0.20\textwidth]{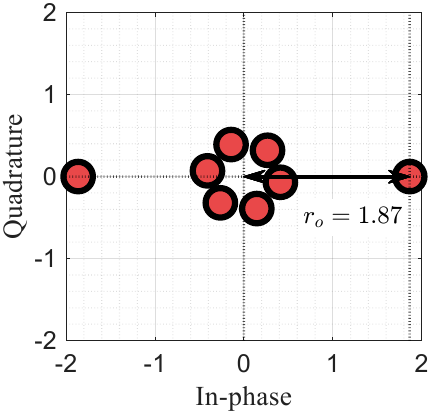}}
    \caption{Privacy-oriented Eve-aware constellations for $\SCR_E=$ (a) $0$~dB, (b) $-3$~dB, (c) $-5$~dB, and (d) $-10$~dB.}
    \label{fig:opt_const}
\end{figure}

\begin{table}[t]
\caption{Benchmark of Proposed and Standard Constellations Under the Default Budget ($\SNR_A=0$ dB, $\SNR_E=-5$ dB, $\SCR_E=-5$ dB)}
\label{tab:comparison}
\centering
\begin{tabular}{lccccc}
\toprule
Constellation & $\dmin$ & $\kurt$ & $\ism$  & Ranging gap \\
\midrule
$8$-PSK & $0.77$ & $1.00$ & $1.00$& 0.134 m \\
$16$-QAM & $0.63$ & $1.32$ & $1.89$  & 0.126 m \\
GCS, (P.2), $\rho=0.9$ & $0.74$ & $1.41$ & $1.22$  & 0.156 m \\
GCS, (P.1), $\rho=0.9$ & $0.56$ & $2.43$ & $2.51$  & 0.171 m \\
\bottomrule
\end{tabular}
\end{table}

Fig.~\ref{fig:R-RMSE} provides the ranging RMSE of Eve and Alice under the proposed GCS, and Fig.~\ref{fig:tradeoff} shows the ranging RMSE gap between Eve and Alice versus the achieved MED, where each curve is obtained by sweeping $\rho$ from $0$ to $1$. Three observations are evident. First, both designs exhibit a clear privacy--communication trade-off controlled by $\rho$. For example, at $\SCR_E=-5$~dB, the Eve-aware design increases the RMSE gap from approximately $0.14$~m to $0.21$~m as $\dmin$ decreases from $0.94$ to $0.48$, while at $\SCR_E=-10$~dB the gap reaches approximately $0.27$~m. Second, the Eve-aware design achieves a larger RMSE gap than the Eve-agnostic design because it exploits knowledge of $\gamma_0$. Nevertheless, the performance difference remains moderate. For example, the Eve-agnostic design achieves an RMSE gap of approximately $0.22$~m at $\SCR_E=-10$~dB without knowledge of Eve's clutter condition. Third, decreasing Eve's SCR shifts the trade-off frontier upward for both designs. This behavior is consistent with \eqref{eq:privacy_closed}: stronger clutter increases the contribution of the kurtosis-dependent sidelobe interference to Eve's sensing error, making the proposed privacy mechanism more effective.

\begin{figure}[t]
\centering
    \subfigure[]{\includegraphics[width=0.2\textwidth]{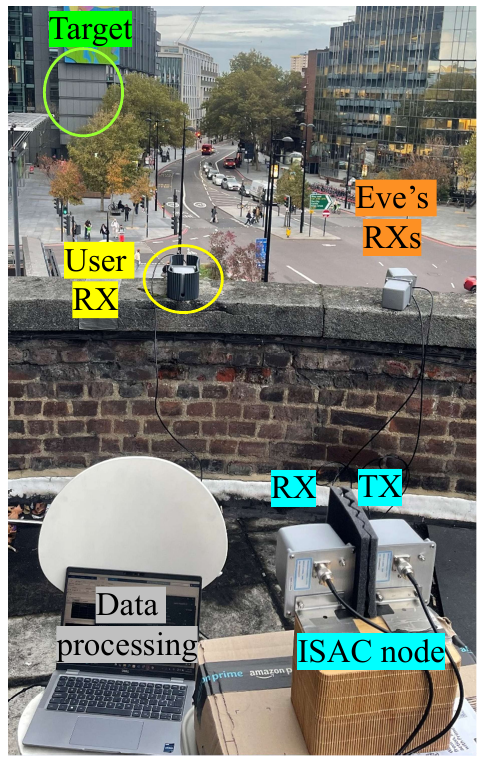}}
    \subfigure[]{\includegraphics[width=0.2\textwidth]{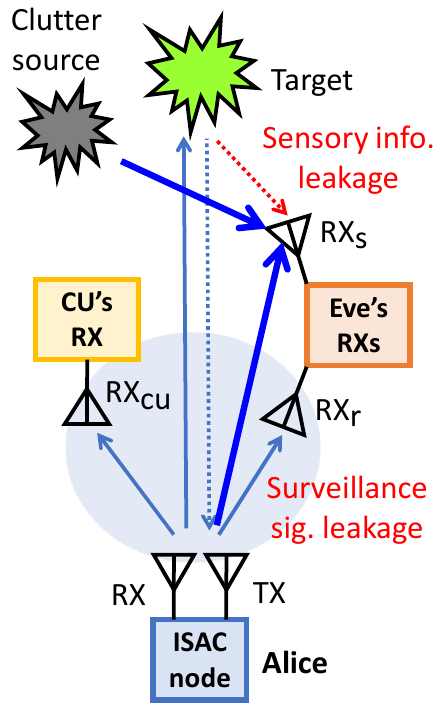}} \\
    \subfigure[]{\includegraphics[width=0.4\textwidth]{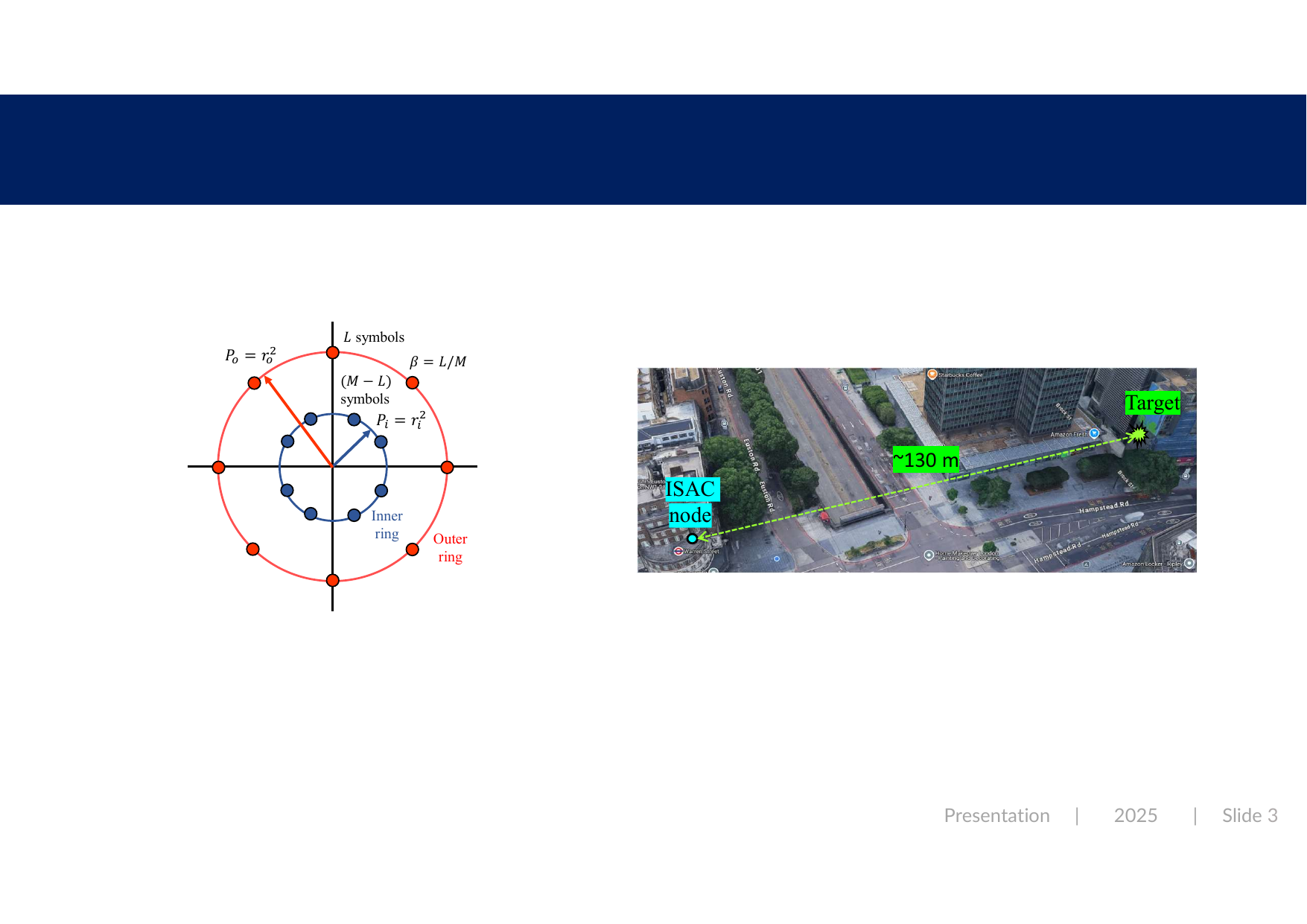}} 
    \caption{Over-the-air experimental setup: (a) photograph of the deployment showing the ISAC node, communication-user RX, and Eve's reference and surveillance RXs; (b) schematic of the deployment, including a dominant clutter scatterer within Eve's field of view, and (c) aerial view of the measurement site.}
\label{fig:setup}
\end{figure}

\subsection{Privacy-Oriented Constellation Geometries}
Fig.~\ref{fig:opt_const} shows the privacy-oriented ($\rho\to1$) Eve-aware constellations obtained under the four SCR conditions. A common feature across the optimized geometries is the emergence of two dominant high-power symbols, while the remaining six symbols form a lower-power cluster away from the origin. Notably, these inner symbols do not lie on a common ring, implying that the optimized constellations are not exact two-ring structures. Nevertheless, the concentration of power on a small subset of symbols is consistent with the positive power skewness identified in Proposition~\ref{prop:skewness} and with the qualitative insight of Proposition~\ref{prop:tworing}. As Eve's SCR decreases, the two dominant symbols move progressively farther from the origin, while the remaining symbols contract to maintain the unit-power constraint. This behavior reflects the increasing importance of the kurtosis term under stronger clutter, resulting in a more skewed power distribution at the expense of MED.

Table~\ref{tab:comparison} compares the proposed designs at $\rho=0.9$ with $8$-PSK and $16$-QAM under $\SCR_E=-5$~dB. The results clearly illustrate the different roles of the Eve-aware and Eve-agnostic objectives. As predicted by Corollary~\ref{cor:psk}, $8$-PSK has $\kurt=\ism=1$ and therefore provides only the baseline ranging gap of $0.134$~m. The $16$-QAM benchmark yields $\kurt=1.32$ and $\ism=1.89$, resulting in a negative moment gap and a slightly smaller ranging gap of $0.126$~m. In contrast, the Eve-agnostic design (P.2) achieves $\kurt-\ism=0.19$ and increases the ranging gap to $0.156$~m while retaining an MED of $0.74$, close to that of $8$-PSK. The Eve-aware design (P.1) further increases the ranging gap to $0.171$~m by more aggressively increasing $\kurt$, although this also raises $\ism$ and reduces the MED to $0.56$. Notably, its intrinsic moment gap is slightly negative, $\kurt-\ism=-0.08$, demonstrating that the Eve-aware design need not maximize the moment gap itself. Instead, it exploits the stronger weighting of the kurtosis term through the known clutter coefficient $c_0$.

\begin{table}[t]
\caption{OFDM-ISAC Prototype System Parameters}
\label{tab:prototype}
\centering
\begin{tabular}{lc}
\toprule
Specification & Value \\
\midrule
Center frequency, $f_c$ & $2.4$ GHz \\
Bandwidth, $B$ & $20$ MHz \\
Number of subcarriers, $N$ & $512$ \\
CP length & $1.6$ $\mu$s \\
Symbol duration & $14.4$ $\mu$s \\
Number of symbols per frame & $512$ \\
Sampling rate & $60$ MHz \\
Antenna gain & $12$ dBi \\
Modulation order, $M$ & $8$ \\
\bottomrule
\end{tabular}
\end{table}

\begin{figure}[t]
\centering
    \subfigure[]{\includegraphics[width=0.24\textwidth]{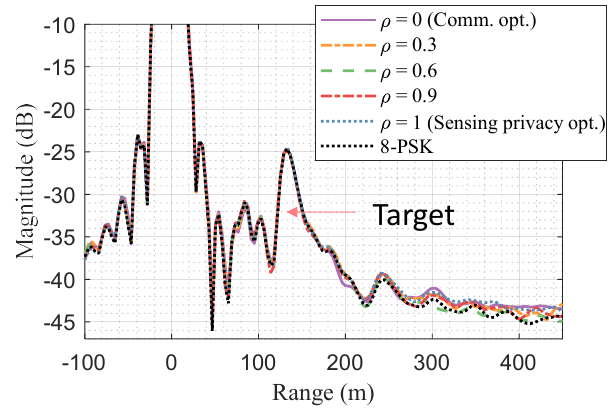}}
    \subfigure[]{\includegraphics[width=0.24\textwidth]{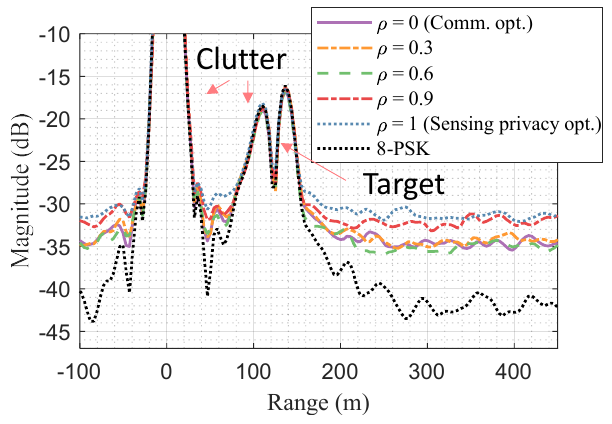}} 
    \caption{Measured range profiles for the Eve-agnostic GCS designs with $\rho\in\{0,0.3,0.6,0.9,1\}$ and the $8$-PSK baseline: (a) Alice with RF processing and (b) Eve with MF processing.}
\label{fig:range_profiles}
\end{figure}

\begin{figure*}[t]
\centering
\includegraphics[width=1.8\columnwidth]{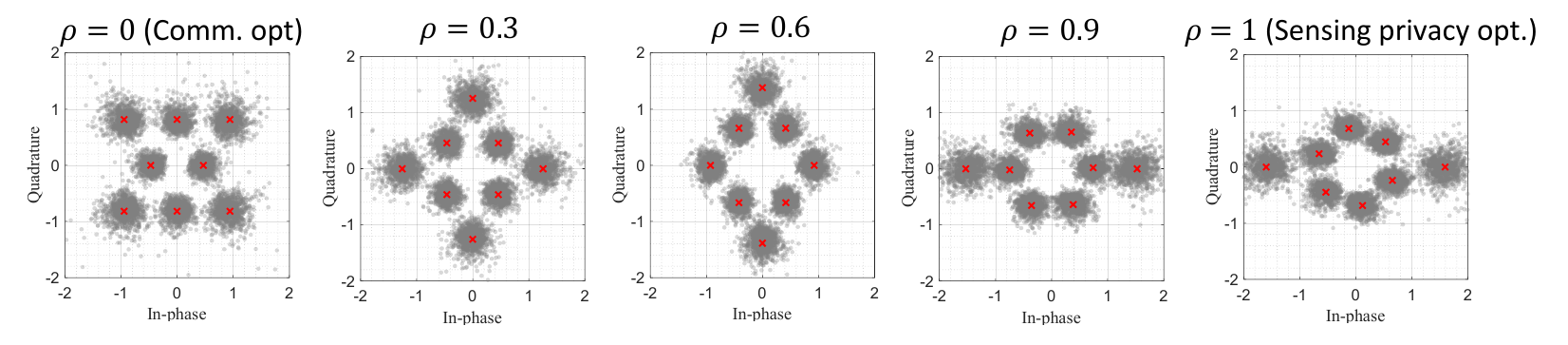}
\caption{Measured constellations at the communication user for the Eve-agnostic designs with $\rho\in\{0,0.3,0.6,0.9,1\}$, shown from left to right.}
\label{fig:meas_const}
\end{figure*}

\subsection{Over-the-air Experimental Results}
\subsubsection{Experimental Setup}
The over-the-air validation extends the monostatic ISAC measurement to an adversarial sensing scenario and focuses on the Eve-agnostic design. The prototype, shown in Fig.~\ref{fig:setup}(a), consists of an AD9363-based SDR connected to directional antennas with $12$~dBi gain and deployed on a rooftop overlooking an urban intersection. The ISAC node (Alice) contains the collocated TX and sensing RX, while a communication-user receiver ($\mathrm{RX}_{\rm cu}$) is located within the coverage area. As illustrated in Fig.~\ref{fig:setup}(b), Eve is emulated by two spatially separated receive chains in accordance with (A.1): a reference antenna $\mathrm{RX}_r$ directed toward the TX and a surveillance antenna $\mathrm{RX}_s$ directed toward the scene. The interested target is a reflective building located at approximately $130$~m from the ISAC node, as shown in Fig.~\ref{fig:setup}(c), and the scene also contains a dominant clutter scatterer within Eve's field of view, providing the multi-scatterer condition considered in Lemma~\ref{thm:eve_mse}. The measured per-subcarrier link parameters before coherent integration are $\SNR_E=1/c_1=-24.19$~dB, $\SNR_A=1/c_2=-30.90$~dB, and $\SCR_E=1/c_0=-28.60$~dB, so that Eve has a $6.7$~dB target-SNR advantage over Alice, providing a challenging sensing-privacy condition in which PSK signaling would favor Eve in terms of ranging accuracy. The OFDM parameters are summarized in Table~\ref{tab:prototype}: carrier frequency $2.4$~GHz, bandwidth $B=20$~MHz, $N=512$ subcarriers, and $512$ symbols per frame, yielding $27$~dB of coherent processing gain. All experimental results are averaged over $100$ independent frames.

\begin{figure}[t]
\centering
\includegraphics[width=0.8\columnwidth]{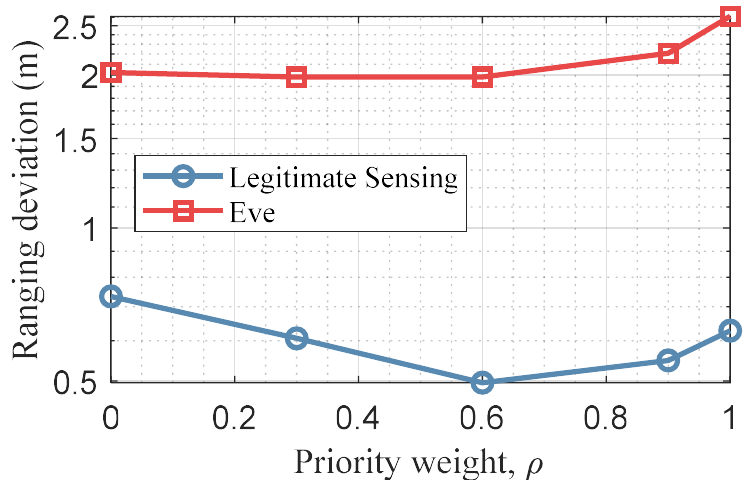}
\caption{Measured ranging deviation at Alice and Eve versus the priority weight $\rho$.}
\label{fig:deviation}
\end{figure}

\subsubsection{Measured Range Profiles and Constellations}
Fig.~\ref{fig:range_profiles} shows the measured range profiles at Alice and Eve. At Alice, as shown in Fig.~\ref{fig:range_profiles}(a), the target peak at approximately $130$~m remains clearly resolved for all designs, and the profiles vary only slightly with $\rho$. Relative to the $8$-PSK baseline, the noise floor increases by at most approximately $1.5$~dB at $\rho=1$, consistent with the $\ism$-dependent noise enhancement in \eqref{eq:rf_noise}. In contrast, Fig.~\ref{fig:range_profiles}(b) shows a pronounced increase in the sidelobe floor at Eve. The strong clutter return generates data-dependent sidelobes, as predicted by \eqref{eq:mse_eve}, raising the floor from approximately $-41$~dB for $8$-PSK to about $-35$~dB at $\rho=0$ and $-31$~dB at $\rho=1$. Thus, the proposed GCS increases Eve's interference floor by approximately $6$--$10$~dB while leaving Alice's range profile largely unchanged. This contrast experimentally demonstrates the receiver asymmetry underlying the proposed sensing-privacy framework. In particular, the relatively clean Eve profile obtained with $8$-PSK is consistent with Corollary~\ref{cor:psk}, which shows that unit-amplitude signaling provides no constellation-induced sensing privacy.

Fig.~\ref{fig:meas_const} shows the corresponding constellations measured at the communication user for the Eve-agnostic designs with $\rho\in\{0,0.3,0.6,0.9,1\}$. Despite channel distortion and noise, the measured constellations preserve the main geometric trend of the designed constellations: as $\rho$ increases, a subset of symbols moves toward higher amplitudes, producing an increasingly skewed symbol-power distribution.

\begin{figure}[t]
\centering
\includegraphics[width=0.8\columnwidth]{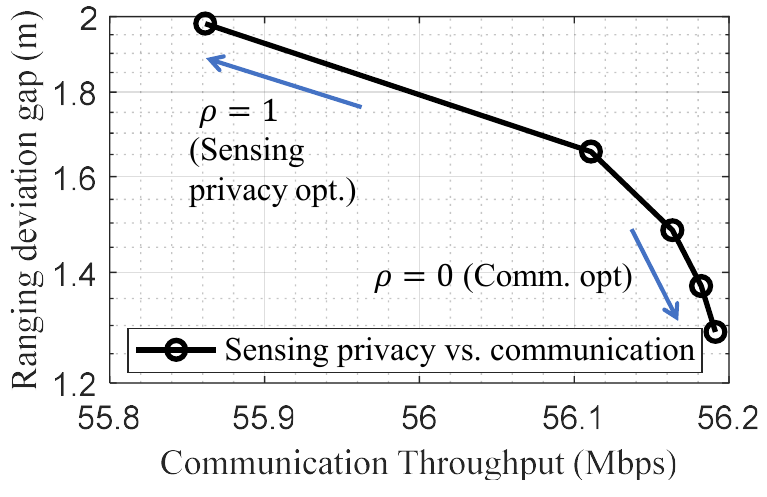}
\caption{Measured privacy--throughput trade-off for the Eve-agnostic design: ranging deviation gap (Eve $-$ Alice) versus communication throughput as $\rho$ varies from $0$ to $1$.}
\label{fig:gap_tput}
\end{figure}

\subsubsection{Ranging Deviation and Privacy--Throughput Trade-off}
The estimation performance over repeated frames is quantified by the ranging deviation
\begin{equation}\label{eq:deviation}
\delta = \sqrt{\frac{1}{N_f}\sum_{i=1}^{N_f}\Big(\hat{r}_i - \frac{1}{N_f}\sum_{\ell=1}^{N_f}\hat{r}_\ell\Big)^{2}},
\end{equation}
i.e., the empirical standard deviation of the per-frame range estimates $\{\hat r_i\}_{i=1}^{N_f}$, which removes the effect of the small static bias associated with the fixed bistatic geometry of Eve. Since the bistatic geometry is fixed across all constellation designs, the associated static bias is approximately constellation-independent. Hence, the ranging deviation serves as a bias-independent proxy for the constellation-dependent random-error component of the ranging MSE. Fig.~\ref{fig:deviation} shows $\delta$ at Alice and Eve as a function of $\rho$. Alice's deviation remains between $0.49$ and $0.73$~m over the entire sweep, with no clear monotonic dependence on $\rho$, indicating that the $\ism$-induced degradation remains small relative to the measurement variability. In contrast, Eve's deviation is approximately $2.0$~m for $\rho\leq0.6$ and increases to $2.20$~m at $\rho=0.9$ and $2.61$~m at $\rho=1$, approximately four times Alice's deviation of $0.63$~m at the same operating point. The relatively large deviation already observed at $\rho=0$ arises because the communication-oriented GCS is itself non-unit-amplitude, producing clutter-induced data-dependent sidelobes even without an explicit privacy weight.

Fig.~\ref{fig:gap_tput} summarizes the measured sensing-privacy--communication trade-off by plotting the ranging deviation gap between Eve and Alice against the communication throughput computed from \eqref{eq:throughput} using the measured BLER at the user RX. As $\rho$ increases from $0$ to $1$, the deviation gap increases from $1.29$~m to $1.98$~m, while the throughput decreases only from $56.19$ to $55.86$~Mbps, corresponding to a loss below $0.6\%$. These results demonstrate that the proposed modulation-level design can substantially increase Eve's ranging uncertainty with only a small communication penalty. The measurements are also consistent with the analytical mechanisms identified earlier: Eve's degradation increases with the clutter-induced, kurtosis-dependent interference in Lemma~\ref{thm:eve_mse}, whereas Alice experiences only the comparatively mild $\ism$-dependent noise enhancement of Lemma~\ref{thm:alice_mse}. Notably, positive sensing privacy is achieved even though Eve has a $6.7$~dB target-SNR advantage over Alice, demonstrating the effectiveness of the Eve-agnostic design under a challenging link geometry.

\section{Conclusion}\label{sec:conclusion}
We presented sensing-privacy-enhancing constellation shaping for OFDM-ISAC systems by exploiting the receiver asymmetry between a legitimate sensing receiver and a passive eavesdropper. We defined and derived a closed-form sensing-privacy metric governed by the kurtosis $\kurt$ and inverse second-order moment $\ism$, and showed that positive skewness of the symbol-power distribution is a key condition for improving sensing privacy. Based on this analysis, Eve-aware and Eve-agnostic constellation shaping designs were developed to balance sensing privacy and communication reliability. Simulations demonstrated clear privacy--communication trade-offs, while over-the-air experiments showed that the Eve-agnostic design can significantly degrade Eve's ranging performance with only a small impact on Alice and communication throughput. Future work includes joint power--constellation shaping, PAPR-aware designs, and extensions to MIMO to enhance sensing-privacy in deployable ISAC systems.

\appendices

\section{Proof of Lemmas \ref{thm:eve_mse} and \ref{thm:alice_mse}}\label{app:mse}
Let $s(\tau)=\mathbf{h}^H(\tau)\mathbf{y}^T$ and $f(\tau)=|s(\tau)|^2$. A second-order expansion around $\tau_k$ gives
\begin{align}\label{eq:err_general}
\E[(\hat\tau_{k}-\tau_{k})^2] & \approx \frac{\E[|\dot s(\tau_{k})|^2]}{2|\E[\ddot s(\tau_{k})]|^2},\\
|\E[\ddot s(\tau_k)]| & = (2\pi\Delta f)^2|\alpha_k|\sum_{n=0}^{N-1}n^2.
\end{align}
For Eve with MF in the case of an ideal reference,
\begin{equation}\label{eq:eve_mf}
\mathbf{y}_{E,\rm MF} = \mathbf{a}_E^T\mathbf{H}_E|\mathbf{X}|^2 + \mathbf{z}_{E,s}\odot\mathbf{x}^*.
\end{equation}
At the target delay, the derivative fluctuation is determined by the data-dependent sidelobes and receiver noise. Using $\mathrm{Var}(|x_n|^2)=\kurt-1$ and neglecting cross terms between resolvable scatterers,
\begin{equation}
\E[|\dot s_{\rm MF}|^2] = (2\pi\Delta f)^2\Big((\kurt-1)\sum_{j\neq k}^{K_E}|\alpha_{E,j}|^2+\sigma_E^2\Big)\sum_{n=0}^{N-1}n^2 .
\end{equation}
Substitution into \eqref{eq:err_general} yields \eqref{eq:mse_eve}. Finite reference quality only adds nonnegative fluctuation power and therefore gives the stated inequality.

Similarly, for Alice with RF, 
\begin{equation}
\E[|\dot s_{\rm RF}|^2] = (2\pi\Delta f)^2\sigma_A^2 \ism\sum_{n=0}^{N-1}n^2.
\end{equation}
Substituting into \eqref{eq:err_general} yields \eqref{eq:mse_alice}.
\hfill$\blacksquare$

\section{Proof of Proposition \ref{prop:skewness}}\label{app:skewness}
Since $t=|x_n|^2=1+u$ with $\E[u]=0$, we have $\kurt=\E[(1+u)^2]=1+\E[u^2]$. For the inverse second-order moment, using
\begin{equation}\label{eq:recip_identity}
\frac{1}{1+u} = 1 - u + u^2 - u^3 + \frac{u^4}{1+u},
\end{equation}
and taking the expectation gives
\begin{equation}\label{eq:nu_exact}
\ism = 1 + \E[u^2] - \E[u^3] + \E\!\left[\frac{u^4}{t}\right].
\end{equation}
Subtracting the two expressions yields \eqref{eq:exact_gap}. Since $t>0$, $\E[u^4/t]\geq0$, which directly gives \eqref{eq:skew_necessary} and shows that $\E[u^3]>0$ is necessary for a positive moment gap. When fourth- and higher-order fluctuations are negligible, the residual term becomes negligible, giving $\kurt-\ism\approx\E[u^3]$. \hfill$\blacksquare$

\section{Proof of Proposition \ref{prop:tworing}}\label{app:tworing}
Eliminating $P_i$ using \eqref{eq:tworing_power} gives $dP_i/dP_o=-\beta/(1-\beta)$. Differentiating \eqref{eq:tworing_moments},
\begin{align}
\frac{\partial G}{\partial P_o}
&=  \beta(P_o-P_i)\left[2-\frac{P_o+P_i}{P_o^2P_i^2}\right]
\end{align}
which proves \eqref{eq:tworing_grad}.

For (i), let $P_o=1+u_o$ and $P_i=1-\frac{\beta}{1-\beta}u_o$. Applying Proposition~\ref{prop:skewness} and retaining terms up to the third order gives
$G\approx\beta\big(1-\frac{\beta^2}{(1-\beta)^2}\big)u_o^3$, whose coefficient is positive if and only if $\beta<1/2$.

For (ii), as $P_i\to0$, $\ism\geq(1-\beta)/P_i\to\infty$, whereas $\kurt$ remains finite since $P_o\to1/\beta$. Hence, $G\to-\infty$.

For (iii), when $\beta<1/2$, property (i) shows that $G$ initially increases from the PSK point, whereas property (ii) gives $G\to-\infty$ at the opposite boundary. Thus, an interior maximum exists and must satisfy the bracket-nulling condition in \eqref{eq:tworing_grad}, yielding \eqref{eq:tworing_opt}. To establish uniqueness, define $r=P_o/P_i>1$ and
$q(r)=(P_o+P_i)/(P_o^2P_i^2)=(r+1)(1-\beta+\beta r)^3/r^2$. For $\beta<1/2$, $q(r)$ first decreases from $q(1)=2$ and then increases monotonically to infinity, so $q(r)=2$ has exactly one solution for $r>1$. Therefore, the interior stationary point is unique and is the global maximum. \hfill$\blacksquare$

\bibliographystyle{IEEEtran}
\bibliography{IEEEabrv,reference}
\end{document}